\documentclass{lmcs} 
\pdfoutput=1

\usepackage{lastpage}
\lmcsdoi{19}{4}{3}
\lmcsheading{}{\pageref{LastPage}}{}{}%
{Nov.~28,~2022}{Oct.~18,~2023}{}

\keywords{simplicial complexes, completeness, canonical model, epistemic logic}

\usepackage[utf8]{inputenc}
\usepackage{amssymb, colonequals, float, booktabs}
\usepackage{hyperref}

\def\ie{{i.e.}}
\def\eg{{\em e.g.}}

\newcommand{\G}{\Gamma}
\newcommand{\D}{\Delta}
\newcommand{\imp}{\rightarrow}
\newcommand{\ndef}{ \not \bowtie}
\newcommand{\isdef}{ \bowtie }
\newcommand{\impdef}{ \mathrel{\ltimes} }
\newcommand{\eqdef}{ \bowtie } 
\newcommand{\trule}[3]{\frac{#1 \hspace{1cm} #2}{#3}}
\newcommand{\orule}[2]{\frac{#1 }{ #2}}
\newcommand{\Sin}{\in^*}
\newcommand{\notSin}{\notin^*}
\newcommand{\canmod}{\mathcal{C}^c}
\newcommand{\C}{\mathcal{C}}
\newcommand{\LMC}{\mathbb{G}} 
\newcommand{\ce}{\colonequals}
\newcommand{\cce}{\coloncolonequals}
\newcommand{\M}{\widehat{K}}
\renewcommand{\phi}{\varphi}

\newcommand{\VV}{\mathcal{V}}
\newcommand{\FF}{\mathcal{F}}
\newcommand{\Sfive}{\mathsf{S5}}
\newcommand{\nisdef}{\not\isdef}

\begin{document}

\title{Impure Simplicial Complexes: Complete Axiomatization}
\thanks{R.~Kuznets and R.~Randrianomentsoa are funded by the Austrian Science Fund (FWF)  ByzDEL project (P33600).}

\author[R.~Randrianomentsoa]{Rojo {Randrianomentsoa}\lmcsorcid{0000-0002-4553-5450}}[a]
\author[H.~van~Ditmarsch]{Hans {van~Ditmarsch}\lmcsorcid{0000-0003-4526-8687}}[b]
\author[R.~Kuznets]{Roman {Kuznets}\lmcsorcid{0000-0001-5894-8724}}[a]

\address{TU Wien, Austria}
\email{rojo.randrianomentsoa@tuwien.ac.at, roman.kuznets@tuwien.ac.at}

\address{University of Toulouse, CNRS, IRIT, France}
\email{hans.van-ditmarsch@irit.fr}
	
\begin{abstract}
\noindent Combinatorial topology is used in distributed computing to model concurrency and asynchrony. The basic structure in combinatorial topology is the simplicial complex, a collection of subsets called simplices of a set of vertices, closed under containment. Pure simplicial complexes describe message passing in asynchronous systems where all processes~(agents) are alive, whereas impure simplicial complexes describe message passing in synchronous systems where processes may be dead~(have~crashed). Properties of impure simplicial complexes can be described in a three-valued multi-agent epistemic logic where the third value represents  formulae  that are undefined, \eg,~the knowledge and local propositions of dead agents. In this work we present an axiomatization for the logic of the class of impure complexes and show soundness and completeness. The completeness proof involves the novel construction of the canonical simplicial model and requires a careful manipulation of undefined formulae.
\end{abstract}

\maketitle

\section{Introduction}
\label{intro}

In this contribution, we present an axiomatization called~$\Sfive^{\isdef}$ for an epistemic logic and show that it is sound and complete for the class of structures known as \emph{impure simplicial complexes}.  
Not surprisingly given its name, $\Sfive^{\isdef}$~is a variant of the well-known multi-agent epistemic logic~$\Sfive$~\cite{hintikka:1962}. 
And in view of that, it may not come as a complete surprise that impure simplicial complexes, a notion from distributed computing~\cite{herlihyetal:2013}, correspond to certain multi-agent Kripke models~\cite{GoubaultLR22,vDitKuz22arXiv}, the usual vehicle to interpret knowledge on. Impure complexes model that agents (or~processes) may have crashed, in other words, may not be alive but dead. 
A live agent may be uncertain whether other agents are dead or alive but may wish to reason about them regardless. In~\cite{vDitKuz22arXiv} (extending~\cite{vDit21WoLLIC}), it was argued that  for a dead agent it should be undefined what it knows or does not know. 
This requires a three-valued logical semantics for the knowledge modality~$K_a$ (for~`agent~$a$ knows~that') and, consequently, a variation of the standard axiomatization~$\Sfive$ wherein, for example, 
the normality axiom~\textbf{K}, \ie~\mbox{$K_a (\phi \imp \psi) \imp (K_a \phi \imp K_a \psi)$}, and modus ponens~\textbf{MP}, \ie~``from~$\phi$~and~$\phi\imp\psi$, infer~$\psi$,'' do not always hold. 
Still~--- we do not wish to put off the reader too much at this stage~--- the usual three properties of knowledge, \ie~veracity, positive introspection, and negative introspection, are  valid in this three-valued semantics. 
In~\cite{vDitKuz22arXiv}, this semantics was extensively motivated. However, no complete axiomatization was given. 
The underlying work is a follow-up investigation providing that axiomatization, and only that, and nothing else. 
In order to make the work self-contained we first sketch the area and the results already obtained in the literature. 

\paragraph{Knowledge and complexes}
Combinatorial topology~\cite{herlihyetal:2013} has been used in distributed computing to model concurrency and asynchrony since~\cite{FischerLP85,luoietal:1987,BiranMZ90}, with higher-dimensional topological properties entering the picture in~\cite{HS99,SZ00}. 
The basic structure in combinatorial topology is the \emph{simplicial complex}, a~collection of subsets called \emph{simplices} of a set of \emph{vertices}, closed under containment. 
Geometric manipulations such as subdivision have natural combinatorial counterparts. 

Epistemic logic investigates knowledge and belief, and change of knowledge and belief, in multi-agent systems. 
A foundational study is~\cite{hintikka:1962}. 
Knowledge change was extensively modeled in temporal epistemic logics~\cite{Pnueli77,halpernmoses:1990,dixonetal.handbook:2015} and in dynamic epistemic logics~\cite{baltagetal:1998,hvdetal.del:2007,moss.handbook:2015}.

A vertex of a complex represents the local state of an agent or process. 
Given a complex, an agent may be uncertain about the local state of other agents. 
This happens when a vertex for this agent is contained in different simplices that contain vertices for the other agents but for different local states of those agents. 
Similarly, higher-dimensional topological properties involve higher-order knowledge or group knowledge. 
There is, therefore, a natural relation between epistemic reasoning and simplicial complexes. 

An epistemic logic interpreted on \emph{pure} simplicial complexes (\ie~where all processes are alive) was proposed in~\cite{ledent:2019,GoubaultLR21}, which was finalized in~\cite{goubaultetal_postdali:2021}. 
It is axiomatized by the multimodal logic~$\Sfive$ augmented with the so-called \emph{locality axiom} stating that all agents know their local state, 
\ie~$K_a p_a \vee K_a \neg p_a$ for all agents~$a$ and local {propositional} variables~$p_a$ for that agent.

Common knowledge and distributed knowledge were also considered in these works, and even the application of novel group epistemic notions such as common distributed knowledge~\cite{Baltag20} to the setting of simplicial semantics, exactly in view of describing higher-dimensional topological properties (for example, whether a complex is a manifold)~\cite{ledent:2019,hvdetal.simpl:2021}.\looseness=-1

The dynamics of knowledge was another focus of those works and of additional works~\cite{PflegerS18,goubaultetal_postdali:2021,diego:2021}: 
the action models of~\cite{baltagetal:1998} can be used to model distributed computing tasks and algorithms. 
Action models even appeared in combinatorial topological incarnations as simplicial complexes in~\cite{ledent:2019} and in~\cite{hvdetal.simpl:2021}. 
Additionally, \cite{diego:2021}~proposed novel dynamic objects called communication patterns that represent arbitrary message exchange and are based on similar proposals in dynamic epistemics~\cite{AgotnesW17,Baltag20}. 
Communication patterns are different from action models, and their relation is {complicated}.

Pure complexes and their temporal development describe \emph{asynchronous} systems. 
All processes then remain potentially active~(alive), even when they have crashed: in asynchronous systems~\cite{HS99} crashed processes can be modeled as being infinitely slow. 
In \emph{synchronous} systems~\cite{HerlihyRT09} crashed processes can be identified~--- for example, in message passing, with timeouts. 
This makes it possible to use \emph{impure} simplicial complexes, wherein some processes are dead, to represent them. 
For a detailed example, see~\cite[Section~13.5.2]{herlihyetal:2013} {(see also~\cite{HerRajTut} for more extended discussion)}. 
Impure simplicial complexes are considerably smaller in terms of their number of faces.

Epistemic logics interpreted on impure simplicial complexes were proposed in~\cite{vDit21WoLLIC}, of which the extended version, henceforth the only cited one, is~\cite{vDitKuz22arXiv},\footnote{All statements from~\cite{vDitKuz22arXiv} we rely on are also present and proved in~\cite{vDit21WoLLIC}, unless stated otherwise.} and also in~\cite{GoubaultLR22}.

In~\cite{vDitKuz22arXiv}, a three-valued modal logical semantics was proposed for an epistemic logical language interpreted in faces of impure simplicial complexes that are decorated with agents and with local propositional variables. 
The third value stands for `undefined.' 
The following are undefined: 
\begin{itemize}
\item dead processes cannot know or be ignorant of any proposition; 
\item live processes cannot know or be ignorant of factual propositions involving processes they know to be dead. 
\end{itemize}
The issue of the definability of formulae has to be handled with care: standard notions such as validity, equivalence, and the interdefinability of dual modalities and non-primitive propositional connectives have to be properly addressed. 
The epistemic modality is a three-valued version of $\Sfive$~knowledge. 
No complete axiomatization was given in~\cite{vDitKuz22arXiv}. 

In~\cite{GoubaultLR22}, a different, classical two-valued modal logical semantics was proposed, also interpreted in impure simplicial complexes. 
The authors reduce the setting to one where processes may not know their local state, and that can then be treated in standard multi-agent modal logic: 
they  axiomatize the logic as multimodal~$\mathsf{KB4}$. 
Exciting further developments of this framework can be found in~\cite{lics23}: 
namely,  distributed knowledge modalities interpreted on simplicial sets (also known as pseudo-complexes), a generalization of simplicial complexes wherein a face may have multiple occurrences.

The works~\cite{GoubaultLR22,vDitKuz22arXiv} also showed that, whereas pure complexes correspond to Kripke models with equivalence relations, impure complexes correspond to Kripke models with partial equivalence relations (symmetric and transitive relations that need not be reflexive).

The novel epistemic notions proposed for impure complexes are akin to  knowledge but are, therefore, not exactly $\Sfive$~knowledge. 
Consider the {\bf T}~axiom $K_a\phi\imp\phi$. 
It is an axiom in~$\Sfive^{\isdef}$ (and in~\cite{vDitKuz22arXiv}) but its meaning is `known formulae cannot be false' rather  than `known formulae  must be true' (because they may be undefined instead). 
\textbf{T}~is not an axiom in $\mathsf{KB4}$~\cite{GoubaultLR22}, but, conveniently, in that work $K_a \bot$~denotes that `agent~$a$ is dead.' 
Various other synchronous and asynchronous epistemic notions have been proposed in distributed computing, relating in yet other ways to modal logics of knowledge and  belief. 
We recall `belief as defeasible knowledge' of~\cite{MosesS93}, 
knowledge in interpreted systems of~\cite{DworkM90,halpernmoses:1990}, 
distinctions between knowledge and belief presented in~\cite{HalpernSS09a}, and 
encoding byzantine faults through the hope modality, an epistemic notion weaker than belief~\cite{Fruzsa:22}, among others.\looseness=-1

\paragraph{Examples of pure complexes and corresponding Kripke models} Some of our readers may be familiar with simplicial complexes, while other readers may be familiar with Kripke models. As the correspondence between these semantic structures has only recently been proposed~\cite{goubaultetal_postdali:2021}, we provide some simple examples illustrating how.

Figure~\ref{fig} depicts some pure simplicial complexes and corresponding Kripke models:
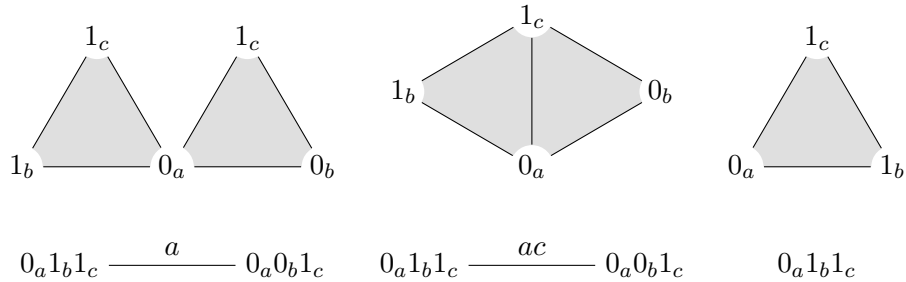
\begin{figure}[ht]
\begin{tabular}{ccc}
\begin{tikzpicture}[round/.style={circle,fill=white,inner sep=1}]
\fill[fill=gray!25!white] (2,0) -- (4,0) -- (3,1.71) -- cycle;
\fill[fill=gray!25!white] (0,0) -- (2,0) -- (1,1.71) -- cycle;

\node[round] (b1) at (0,0) {$1_b$};
\node[round] (b0) at (4,0) {$0_b$};
\node[round] (c1) at (3,1.71) {$1_c$};
\node[round] (lc1) at (1,1.71) {$1_c$};
\node[round] (a0) at (2,0) {$0_a$};

\draw[-] (b1) -- (a0);
\draw[-] (b1) -- (lc1);
\draw[-] (a0) -- (lc1);
\draw[-] (a0) -- (b0);
\draw[-] (b0) -- (c1);
\draw[-] (a0) -- (c1);
\end{tikzpicture}
&
\begin{tikzpicture}[round/.style={circle,fill=white,inner sep=1}]
\fill[fill=gray!25!white] (2,0) -- (2,2) -- (3.71,1) -- cycle;
\fill[fill=gray!25!white] (0.29,1) -- (2,0) -- (2,2) -- cycle;

\node[round] (b1) at (.29,1) {$1_b$};
\node[round] (b0) at (3.71,1) {$0_b$};
\node[round] (c1) at (2,2) {$1_c$};
\node[round] (a0) at (2,0) {$0_a$};

\draw[-] (b1) -- (a0);
\draw[-] (b1) -- (c1);
\draw[-] (a0) -- (b0);
\draw[-] (b0) -- (c1);
\draw[-] (a0) -- (c1);
\end{tikzpicture}
& 
\begin{tikzpicture}[round/.style={circle,fill=white,inner sep=1}]
\fill[fill=gray!25!white] (2,0) -- (4,0) -- (3,1.71) -- cycle;

\node[round] (b0) at (4,0) {$1_b$};
\node[round] (c1) at (3,1.71) {$1_c$};
\node[round] (a0) at (2,0) {$0_a$};

\draw[-] (a0) -- (b0);
\draw[-] (b0) -- (c1);
\draw[-] (a0) -- (c1);
\end{tikzpicture} 
\\ 
&& \\
\begin{tikzpicture}
\node (010) at (.5,0) {$0_a1_b1_c$};
\node (001) at (3.5,0) {$0_a0_b1_c$};
\draw[-] (010) -- node[above] {$a$} (001);
\end{tikzpicture}
&
\begin{tikzpicture}
\node (010) at (.5,0) {$0_a1_b1_c$};
\node (001) at (3.5,0) {$0_a0_b1_c$};
\draw[-] (010) -- node[above] {$ac$} (001);
\end{tikzpicture}
&
\begin{tikzpicture}
\node (010) at (.5,0) {$0_a1_b1_c$};
\end{tikzpicture}
\end{tabular}
\caption{Pure simplicial complexes and corresponding Kripke models}
\label{fig}
\end{figure}

These simplicial complexes are for three agents~$a$, $b$,~and~$c$. 
The vertices of a simplex are required to be labeled, or colored, with different agents. 
Therefore, a maximum-size simplex, called \emph{facet},  consists of three vertices, described as having \emph{dimension~$2$}. 
These are the triangles in the figure. 
For two agents we get lines/edges, for four agents we get tetrahedra,~etc. 
In the figure, the vertices are not named, to simplify the exposition. 
A label like~$0_a$ on a vertex represents that it is a vertex for agent~$a$ and that agent~$a$'s local state has value~$0$,~etc. 
We can see this as the boolean value of a local proposition, where $0$~means false and $1$~means true. 
In other figures (where the value of a proposition is irrelevant) we may only provide agent labels or vertex names. 
Simplicial complexes where vertices are decorated with agents and with propositions will later be called simplicial \emph{models}.

The Kripke models are also defined for three agents. 
The {worlds} that are indistinguishable for agent~$a$ are connected by an $a$-labeled link; in other words, for each agent we assume reflexivity, transitivity, and symmetry of the usual accessibility relations. 

A triangle corresponds to a {world}  in a three-agent Kripke model. 
Together the local states of the triangle determine the {world} of the Kripke model, and the values of the local state variables together determine the valuation in that {world}. 
For example, in {worlds} labeled~$0_a1_b1_c$ in Figure~\ref{fig} agent~$a$'s value is~$0$, $b$'s~is $1$, and $c$'s~is $1$. 
The single triangle corresponds to the singleton reflexive world below it. 
With two triangles, if they  intersect only in an $a$-colored vertex, \ie~are \emph{$a$-adjacent}, it means that agent~$a$ cannot distinguish these (global)~states (as in the corresponding Kripke model), making $a$~uncertain about the value of~$b$. 
If the triangles intersect in two vertices colored~$a$~and~$c$, both~$a$~and~$c$ are uncertain about the value of~$b$.

The current state of the distributed system is represented by a distinguished facet of the simplicial complex, just as we need a distinguished {world} in a Kripke model in order to evaluate propositions. 
For example, in the leftmost triangle, as well as in the leftmost {world}, the value of~$b$ is~$1$ but $a$~is uncertain whether the value of~$b$ is~$0$~or~$1$. 
In other words, $a$~does not \emph{know} whether the value of~$b$ is~$0$~or~$1$. 
In the same triangle, $b$~knows that its value is~$1$ and all three agents know that the value of~$c$ is~$1$.

\paragraph{Examples of impure  complexes and corresponding Kripke models}
We continue with several examples of impure complexes, also in relation to pure complexes we have already seen above. 
They are depicted in Figure~\ref{fig2}. See~\cite[Section~13.5.2]{herlihyetal:2013} for a standard example of an impure simplicial complex from distributed computing.

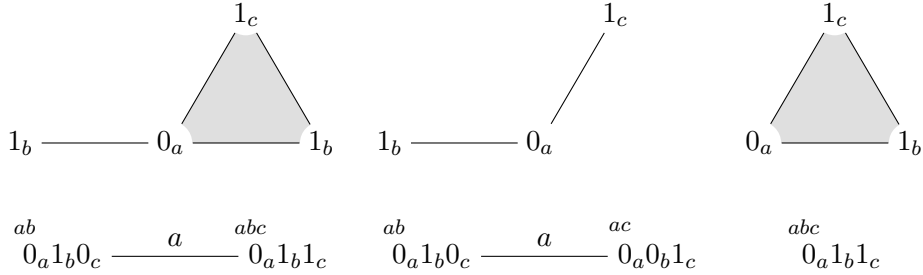
\begin{figure}[ht]
\begin{tabular}{cccccc}
\begin{tikzpicture}[round/.style={circle,fill=white,inner sep=1}]
\fill[fill=gray!25!white] (2,0) -- (4,0) -- (3,1.71) -- cycle;

\node[round] (b1) at (0,0) {$1_b$};
\node[round] (b0) at (4,0) {$1_b$};
\node[round] (c1) at (3,1.71) {$1_c$};
\node[round] (a0) at (2,0) {$0_a$};

\draw[-] (b1) --  (a0);
\draw[-] (a0) -- (b0);
\draw[-] (b0) -- (c1);
\draw[-] (a0) -- (c1);
\end{tikzpicture} 
&
\begin{tikzpicture}[round/.style={circle,fill=white,inner sep=1}]
\node[round] (b1) at (0,0) {$1_b$};
\node[round] (b0) at (4,0) {\color{white}$0_b$};
\node[round] (c1) at (3,1.71) {$1_c$};
\node[round] (a0) at (2,0) {$0_a$};

\draw[-] (b1) --  (a0);
\draw[-] (a0) --  (c1);
\end{tikzpicture} 
&
\begin{tikzpicture}[round/.style={circle,fill=white,inner sep=1}]
\fill[fill=gray!25!white] (2,0) -- (4,0) -- (3,1.71) -- cycle;

\node[round] (b0) at (4,0) {$1_b$};
\node[round] (c1) at (3,1.71) {$1_c$};
\node[round] (a0) at (2,0) {$0_a$};

\draw[-] (a0) -- (b0);
\draw[-] (b0) -- (c1);
\draw[-] (a0) -- (c1);
\end{tikzpicture} 
\\ 
&& \\
\begin{tikzpicture}
\node (010l) at (0,0.4) {\scriptsize$ab$};
\node (001l) at (3,0.4) {\scriptsize$abc$};
\node (010) at (.5,0) {$0_a1_b0_c$};
\node (001) at (3.5,0) {$0_a1_b1_c$};
\draw[-] (010) -- node[above] {$a$} (001);
\end{tikzpicture}
& 
\begin{tikzpicture}
\node (010l) at (0,0.4) {\scriptsize$ab$};
\node (001l) at (3,0.4) {\scriptsize$ac$};
\node (010) at (.5,0) {$0_a1_b0_c$};
\node (001) at (3.5,0) {$0_a0_b1_c$};
\draw[-] (010) -- node[above] {$a$} (001);
\end{tikzpicture} 
&
\begin{tikzpicture}
\node (010) at (0,0) {$0_a1_b1_c$};
\node (001l) at (-.5,.4) {\scriptsize$abc$};
\end{tikzpicture}
\end{tabular}
\caption{Impure simplicial complexes and corresponding Kripke models}
\label{fig2}
\end{figure}

The pure complex on the right encodes that all agents know local states of each other, just as in the same complex on the right of Figure~\ref{fig}. 
It is merely here for comparison.

The impure complex on the left of Figure~\ref{fig2} encodes that agent~$a$ is uncertain whether agent~$c$ is alive. 
When agent~$c$ is dead, in the edge, it does not have a value and propositions like `agent~$c$ knows that $p_b$~is true' are undefined. 
When agent~$c$ is alive, in the triangle, its local variable~$p_c$ is true. 
We, therefore, say that agent~$a$ knows that $p_c$~is true: whenever agent~$c$ is alive and, therefore, local variable~$p_c$ is defined (namely, in the triangle) it is true. 

The impure complex in the middle encodes that agent~$a$ is uncertain whether $b$~or~$c$~has crashed. 
Agent~$a$ no longer considers it possible that all processes are alive. 
In this case, agent~$a$ knows that~$p_b$, justified by the edge~$1_b$---$0_a$ (where~$b$~is alive) and 
agent~$a$ also knows that~$p_c$, justified by the edge~$0_a$---$1_c$ (where~$c$~is alive). 
However, pointedly, agent~$a$ does not know that $p_b \wedge p_c$: 
for that to be true we would need a triangle wherein $a$, $b$,~and~$c$~are all alive. 
This example demonstrates that $K_a \phi \wedge K_a \psi$ does not imply $K_a (\phi\wedge\psi)$.

Below the complexes we find the corresponding Kripke models. 
In this visualization, the upper indices to a {world} make explicit who is alive in this {world}. 
It is merely a depiction of models with symmetric and transitive relations, also known as \emph{partial equivalence relations}. 
Such  a relation is a partial equivalence in the sense that it is an equivalence relation when restricted to the worlds where it is not empty, \ie~in our terms, where the agent is alive. 

In the Kripke model on the left, as $c$~is dead in the $0_a1_b0_c$~world and alive in the $0_a1_b1_c$~world, the relation for agent~$c$ is restricted to the reflexive arrow from~$0_a1_b1_c$ to itself. 
The Kripke model representation, unlike the impure simplicial complex, contains superfluous information: 
even though $c$~is dead in the world~$0_a1_b0_c$, local variable~$p_c$ is false there. Therefore, another representation of the same impure complex is the model~$0_a1_b1_c$~$\overset{a}{\text{---{}---}}$~$0_a1_b1_c$, where again~$c$ is dead in~$0_a1_b1_c$. (Strictly speaking, the number of such representations depends on how many local variables agent~$c$ has.)

Similarly, in the Kripke model in the middle, $c$~is dead on the left and $b$~is dead on the right. 
Accordingly, there are four different Kripke models corresponding to the simplicial complex above it.

These are some examples of pure and impure complexes, and how they are used to model agents' reasoning about each other.

\paragraph{Our results}
We present a complete axiomatization~$\Sfive^{\isdef}$ for a three-valued epistemic semantics interpreted on impure complexes. 
It has some notable differences from the standard multi-agent axiomatization~$\Sfive$, such as non-standard axioms~$\mathbf{K}^{\bowtie}$~and~$\mathbf{K\M}$ instead of~$\mathbf{K}$ and a non-standard derivation rule~$\mathbf{MP}^{\bowtie}$ instead of~$\mathbf{MP}$. 
Some technical features of the completeness proof may be considered novelties of interest. 
First, instead of a canonical Kripke model, we construct a canonical model that is a \emph{canonical simplicial complex}. 
Second, the faces of this canonical simplicial complex are not maximal consistent sets, but what we  call \emph{definability-maximal consistent sets}: they need not be maximal and may be contained in one another, just like the faces of simplicial complexes in general.

\paragraph{Outline}
Section~\ref{ISM_and_Def} recalls the impure simplicial semantics of~\cite{vDitKuz22arXiv} and obtains additional results for formula definability that are crucial in this three-valued setting. Section~\ref{axiom_sys} presents axiomatization~$\Sfive^{\isdef}$. 
Section~\ref{soundness} shows that this axiomatization is sound, and Section~\ref{completeness} that it is complete. 
Section~\ref{reflections} describes some remarkable features of the simplicial canonical model. Section~\ref{conclusion} concludes. 
Some of the proofs are relegated to Appendix~\ref{b}, with Appendix~\ref{a} providing auxiliary statements used exclusively in these proofs.

\section{Impure Simplicial Models and Definability}
\label{ISM_and_Def}

We recall the semantics of simplicial complexes from~\cite{vDitKuz22arXiv}, both pure (for fault-free systems) and impure (for systems where byzantine failures are restricted to crashes). 

We consider a finite set~$A$ of \emph{agents} (or~\emph{processes})~$a,b,\dots$ and a  set $P = \bigsqcup_{a \in A} P_a$ of \emph{propositional} \emph{variables} where sets~$P_a$ are countable and mutually disjoint sets of \emph{local variables for agent~$a$}, denoted~$p_a, q_a, p'_a, q'_a, \dots$\footnote{We sometimes use~$\sqcup$ instead of~$\cup$ to emphasize that this is a union of mutually disjoint sets. Similarly, we use~$\subseteq$ for inclusion and $\subset$~for proper inclusion, but may use~$\subsetneq$ to emphasize the inclusion properness.} 
 As usual in combinatorial topology, 
the number~$|A|$ of agents is taken to be~$n+1$ for some~$n \in \mathbb N$, so as to make the dimension of a simplicial complex (to be defined below), which is one less than the number of agents,  equal to~$n$.\looseness=-1

\begin{defi}[Language]
The \emph{language of epistemic logic}  is defined by the grammar $\phi 
\cce 
p_a \mid \neg\phi \mid (\phi\land\phi) \mid K_a \phi$ where $a \in A$ and $p_a \in P_a$.
\end{defi}
Parentheses will be omitted unless confusion results. 
Connectives~$\imp$, $\leftrightarrow$,~and~$\lor$ are defined by abbreviation as usual, as well as $\M_a\phi := \neg K_a \neg \phi$.
Expression~$K_a \phi$ stands for `agent~$a$ knows (that)~$\phi$,' and $\M_a \phi$~stands for `agent~$a$ considers (it possible that)~$\phi$.'  

\begin{defi}[Simplicial model]
A \emph{simplicial model}~$\C$ is a triple\/~$(C,\chi,\ell)$ consisting of: 
\begin{itemize}
\item 
A \emph{(simplicial) complex} $C\ne \varnothing$ is a collection of  \emph{simplices} that are non-empty finite subsets of a given set~$\VV$  of vertices such that $C$~is downward closed (\ie~$X \in C$ and\/ $\varnothing \ne Y \subseteq X$ imply $Y \in C$). Simplices represent partial global states of a distributed system. 
It is required that every vertex form a simplex by itself, \ie~$\bigl\{\{v\} \mid v \in \VV\bigr\} \subseteq C$.
\item 
Vertices represent local states of agents, with a \emph{chromatic map}~$\chi \colon \VV \to A$ assigning each vertex to one of the agents in such a way that  each agent has at most one vertex per  simplex, 
\ie~$\chi(v) = \chi(u)$ for some $v,u \in X \in C$ implies that $v = u$.
For $X \in C$, we define 
\[
\chi(X) \ce \{\chi(v) \mid v \in X\}
\] 
to be the set  of agents in simplex~$X$.
\item 
A \emph{valuation} $\ell \colon \VV \to 2^P$ assigns to each vertex which local variables of the vertex's owner are true in it, \ie~$\ell(v) \subseteq P_a$ whenever $\chi(v) = a$. 
Variables from  $P_a\setminus\ell(v)$ are false in vertex~$v$, whereas variables from $P \setminus P_a$ do not belong to agent~$a$ and cannot be evaluated in $a$'s~vertex~$v$.
The set of variables that are true in a simplex $X \in C$ is given by 
\[
\ell(X)\ce\bigsqcup_{v \in X} \ell(v).
\]
\end{itemize}
If $Y \subseteq X$ for $X, Y \in C$, we say that $Y$~is a \emph{face} of~$X$. 
Since each simplex is a face of itself, we use `simplex' and `face' interchangeably.  
A face~$X$ is a \emph{facet} if it is a maximal simplex in~$C$, \ie~$Y \in C$ and $Y \supseteq X$ imply $Y = X$. 
Facets represent global states of the distributed system, and their set is denoted~$\FF(C)$. 
The \emph{dimension of simplex}~$X$ is\/~$|X|-1$, \eg,~vertices are of dimension\/~$0$,  edges are of dimension\/~$1$,~etc. 
The \emph{dimension of a simplicial complex~(model)} is the largest dimension of its facets. 
A simplicial complex (model) is \emph{pure} if all facets have dimension~$n$, \ie~contain vertices for all agents. 
Otherwise it is \emph{impure}. 
A~\emph{pointed simplicial model} is a pair\/~$(\C,X)$ where $X \in C$. 
\end{defi} 

\begin{rems}\label{rem:simpmodels}\hfill\vspace*{-\baselineskip}
\begin{enumerate}[(i)]
\item
It follows from the above definition that for any simplicial complex~$C$, its set of vertices $\VV = \bigcup_{X \in C} X$. Thus, it is not necessary to specify~$\VV$ when defining a complex~$C$ or a simplicial model $\C=(C,\chi,\ell)$ based on this complex. 
Further, complex~$C$ can be computed from the set~$\FF(C)$ of its facets by taking the downward closure with respect to non-empty sets. 
Thus, we typically forgo defining~$\VV$ explicitly and often specify~$\FF(C)$ instead of~$C$.
\item
\label{rem:facet}
Due to the chromatic-map condition, $|X|\leq n+1$ for all faces $X$ of a complex and, hence, each face is contained in a facet. 
\item
It has been common in the literature, including in~\cite{vDitKuz22arXiv}, to consider a more expansive view of pure simplicial models requiring all the facets to have merely the same dimension, without additionally requiring this common dimension to be the largest possible for the given set of agents as we do here. We believe that the distinction based on the presence/absence of dead agents is more meaningful than the previously used distinction based on the stability of the number of dead agents. Indeed, it is the presence of dead agents that gives rise to the question of what these dead agents know. And, as we will see, it is the presence of dead agents rather than fluctuations of their numbers that separates the logic of impure simplicial models from the standard epistemic logic. In particular, both counterexamples to modus ponens from~\cite{vDitKuz22arXiv} (not present in~\cite{vDit21WoLLIC}) and in the proof of Lemma~\ref{ex:noMP} involve  simplicial models with all facets of the same dimension but  with at least one dead agent per facet.
\end{enumerate}
\end{rems}

Since simplicial models are an alternative semantics for epistemic logic, it is useful to understand their relationship to Kripke models. Worlds of a Kripke model correspond to facets of a simplicial model. Vertices represent local states, and, hence, have no direct Kripke representation. However, they play a crucial role in the simplicial variant of indistinguishability. Where two Kripke worlds are indistinguishable for agent $a$, their corresponding facets~$X$~and~$Y$ in a simplicial model must have agent~$a$ in the same local state, meaning that $X$~and~$Y$~must share an $a$-colored vertex, \ie~$a \in \chi(X \cap Y)$, and we call them  \emph{$a$-adjacent}. We also apply $a$-adjacency to faces.

The special case of pure simplicial models (of the maximum dimension $n$) corresponds to the standard epistemic logic $\Sfive_{n+1}$ (augmented with the locality axiom to account for variables being local, as mentioned in the introduction) and is a representation dual to epistemic Kripke models. As noted in~\cite{vDitKuz22arXiv}, however, the case of impure models introduces a new situation where formulae may be undefined in a facet. Indeed, if $X$ is a facet with $b \notin \chi(X) $, which represents a global state where agent $b$ has crashed, then  all local propositional variables of agent~$b$ are considered undefined. This imposes a two-tier system on formulae: one must first determine whether a formula is defined, before one can probe whether it is true or false. Note that we   do not restrict these definitions to facets only.

\begin{defi}[Definability]
\label{def:defin}
Given a simplicial model $\C=(C,\chi,\ell)$ and its face $X \in C$,
	the definability relation\/~$\isdef$  is defined by recursion on the formula construction:
		\begin{equation*}
		\begin{array}{lll}
			\C,X  \isdef p_a &\text{ if{f} } & a \in \chi(X); \\
			\C,X  \isdef \lnot \phi &\text{ if{f} }&  \C, X \isdef \phi; \\
			\C,X  \isdef \phi \land \psi &\text{ if{f} } & \C,X \isdef \phi \text{ and } \C,X \isdef \psi; \\
			\C,X  \isdef K_a \phi &\text{ if{f} }&  \C,Y \isdef \phi \text{ for some } Y \in C \text{ with } a \in \chi(X\cap Y) .
		\end{array}
	\end{equation*}
\end{defi}
\begin{defi}[Truth]
\label{def:truth}
	Given a simplicial model $\C=(C,\chi,\ell)$ and its face $X \in C$, the truth relation\/~$\vDash$  is defined by recursion on the formula construction:
	\begin{equation*}
		\begin{array}{lll}
			\C,X  \vDash p_a &\text{ if{f} }&    p_a \in \ell(X); \\
			\C,X  \vDash \lnot \phi &\text{ if{f} }&  \C,X \isdef \lnot \phi \text{ and }   \C, X \nvDash \phi; \\
			\C,X  \vDash \phi \land \psi &\text{ if{f} }&  \C,X \vDash \phi \text{ and } \C,X \vDash \psi; \\
			\C,X  \vDash K_a \phi &\text{ if{f} }&  \C, X \isdef K_a \phi \text{ and }
			\\
			&&C,Y \vDash \phi \text{ for all } Y \in C  \text{ such that } a \in \chi(X\cap Y) \text{ and } \C, Y \isdef \phi.
		\end{array}
	\end{equation*}
\end{defi}

\begin{rem}
These definitions are equivalent to those from~\cite{vDitKuz22arXiv}, but differ from them as follows:
\begin{itemize}
\item
$K_a$ is used as the primary connective instead of $\M_a$ in~\cite{vDitKuz22arXiv}, which is harmless in light of the duality of $K_a$ and $\M_a$.
\item In~\cite{vDitKuz22arXiv}, $\C, X \vDash p_a$ additionally requires that $a \in \chi(X)$, which  is redundant because $\ell(X)\cap P_a = \varnothing$ if $a \notin \chi(X)$. 
\item In~\cite{vDitKuz22arXiv}, $\C, X \vDash \lnot \phi$ requires $\C, X \isdef \phi$ instead of the equivalent $\C, X \isdef \lnot \phi$ here.
\end{itemize}
These modifications are primarily to avoid the necessity  of changing  the axiomatization of~$\Sfive$ in order to  ensure the Replacement Property in the language based on a $\Diamond$-like modality.
\end{rem}

One of the design choices made is that undefined formulae do not have truth values.\footnote{The next few results are from~\cite{vDitKuz22arXiv}, but are reproved here for this language.}
\begin{lem}
\label{lem:threevalues}
For a simplicial model $\C=(C,\chi,\ell)$, its face $X \in C$, and formula $\phi$:
\begin{enumerate}[(a)]
\item
\label{clause:threevalues_one}	
$\C, X \vDash \phi$
\qquad$\Longrightarrow$\qquad
$\C, X \isdef \phi$.
\item
\label{clause:threevalues_two}
$\C, X \vDash \lnot \phi$
\qquad$\Longrightarrow$\qquad
 $\C, X \isdef \phi$.
\item 
It cannot be that both\quad $\C, X \vDash \phi$\quad and\quad $\C, X \vDash \lnot \phi$.
\end{enumerate}
\end{lem}
\begin{proof}
\hfill
\begin{enumerate}[(a)]
\item
The proof is by induction on the construction of $\phi$. For $\phi = \lnot \psi$ and $\phi = K_a \psi$, the statement follows directly from Definition~\ref{def:truth}.  If  $\C,X \vDash p_a$, then $p_a \in \ell(X)$, hence, $a \in \chi(X)$, meaning that $\C, X \isdef p_a$. If  $\C, X \vDash \psi \land \theta$, then $\C, X \vDash \psi$ and $\C, X \vDash  \theta$. By the induction hypothesis,  $\C, X \isdef \psi$ and $\C, X \isdef \theta$, hence, $\C, X \isdef \psi \land \theta$.
\item
If $\C, X \vDash \lnot \phi$, then $\C, X \isdef  \lnot \phi$ by clause~\ref{clause:threevalues_one}. Hence, $\C, X \isdef \phi$ by  Definition~\ref{def:defin}.
\item
This follows directly from Definition~\ref{def:truth}. \qedhere
\end{enumerate} 
\end{proof}

Thus, for $\C=(C,\chi,\ell)$, $X\in C$, and  $\phi$, there can be three mutually exclusive outcomes:
\begin{enumerate}[(i)]
\item if $\C, X \vDash \phi$, we say that  \emph{$\phi$ is true in $X$};
\item if $\C, X \nisdef \phi$, we say that \emph{$\phi$ is undefined in $X$}, due to  crashes;
\item if $\C, X \isdef \phi$ and $\C, X \nvDash \phi$, \ie~$\C, X \vDash \lnot \phi$, we say  that \emph{$\phi$ is false in $X$}.
\end{enumerate}

The following corollary is easy to obtain from Definitions~\ref{def:defin} and~\ref{def:truth} and  the abbreviation $\M_a \phi \ce \lnot K_a \lnot \phi$:
\begin{cor}
\label{cor:simple_def}
For a simplicial model $\C=(C,\chi,\ell)$, its face $X \in C$, formula $\phi$, and agent~$a$:
\begin{enumerate}[(a)]
\item
$\C, X \isdef \M_a \phi$ if{f}
$\C,Y \isdef \phi$  for some $Y \in C$ such that  $a \in \chi(X\cap Y)$,
\ie~the definability conditions for $\M_a \phi$ and $K_a \phi$ are the same.
\item
\label{eq:trueKhat}
$\C, X \vDash \M_a \phi$ if{f}
$\C,Y \vDash \phi$  for some $Y \in C$ such that  $a \in \chi(X\cap Y)$.
\item
\label{eq:trueK}
$\C, X \vDash K_a \phi$ if{f}
\begin{itemize}
\item $\C,Y \vDash \phi$  for some $Y \in C$ such that  $a \in \chi(X\cap Y)$
 and
 \item
$\C, Z \vDash \phi$  whenever $\C, Z \isdef \phi$ for any $Z \in C$ such that  $a \in \chi(X\cap Z)$.
\end{itemize}
\item
\label{eq:falseK}
$\C, X \vDash \lnot K_a \phi $ if{f}
$\C,Y \vDash \lnot \phi$  for some $Y \in C$ such that  $a \in \chi(X\cap Y)$. \qed
\end{enumerate} 
\end{cor}

Definability and truth for other boolean connectives are even easier to derive. In particular,   $\phi \lor \psi$, $\phi \to \psi$, and $\phi \leftrightarrow \psi$ are defined if{f} $\phi \land \psi$ is, \ie~when both $\phi$ and $\psi$ are defined.\looseness=-1

The proof of the following monotonicity properties from~\cite{vDitKuz22arXiv} applies as is:
\begin{lem}[Monotonicity \cite{vDitKuz22arXiv}]
	\label{lem:defMonotonicity}
	For a simplicial model $\C=(C,\chi,\ell)$, its faces \mbox{$X,Y \in C$} such that $X \subseteq Y$, and formula $\phi$:
	\begin{enumerate}[(a)]
	\item\label{monot_defin}
	$\C,X \isdef \phi$
	\qquad$\Longrightarrow$\qquad
	$\C,Y \isdef \phi$.
	\item\label{monot:true}
	$\C,X \vDash \phi$
	\qquad$\Longrightarrow$\qquad
	$\C,Y \vDash \phi$.
	\item\label{monot_back}
	$\C,Y \vDash \phi$\quad and\quad  $\C, X \isdef \phi$
	\qquad$\Longrightarrow$\qquad
	$\C,X \vDash \phi$. \qed
	\end{enumerate} 
\end{lem}

Since the standard \emph{modus ponens} rule is not valid for impure simplicial models~\cite{vDitKuz22arXiv} (this result is not present in~\cite{vDit21WoLLIC}), the rule has to be weakened based on the notion of \emph{definability consequence} that is similar to logical consequence but  with definability in place of truth. Before we state the axiom system for the logic of impure simplicial models, we formulate this notion and list its relevant properties.

\begin{defi}[Definability consequence]
A formula $\psi$ is a \emph{definability consequence} of a set\/~$\G$ of formulae, notation\/
 $\G \impdef \psi$, if{f} there exists a finite subset\/ $\G' \subseteq \G$  such that for any simplicial model $\C=(C,\chi,\ell)$ and its face $X \in C$, 
\[
(\forall \phi \in \G')\, \C,X \isdef \phi 
\qquad  \Longrightarrow \qquad
\C,X \isdef \psi.
\] 
To simplify  notation we omit\/~$\{$~and\/~$\}$ for finite sets\/~$\Gamma$, where we assume without loss of generality that\/ $\G' = \G$. Note, in particular, that\/ $\varnothing \not\impdef \psi$  for any formula $\psi$. For a set\/~$\D$ of formulae,\/  $\G \impdef \D$ means that\/  $\G \impdef \psi$ for all $\psi \in \D$. 
Formulae $\phi$ and $\psi$ are called \emph{equidefinable}, written $\phi \eqdef \psi$, when both $\phi \impdef \psi$ and $\psi \impdef \phi$.
\end{defi}

\begin{defi}
For a set\/~$\Gamma$ of formulae and agent~$a$, we define
$K_a \G \ce \{K_a \theta \mid K_a \theta \in \G\}$.
\end{defi}

\begin{restatable}[Properties of definability consequence]{lem}{defcon}
	\label{lem:eq_definabilities}
	For an  agent $a$, propositional variable~$p_a$, formulae $\phi, \psi, \theta, \psi_1,\dots, \psi_m$, and formula sets\/~$\Gamma$ and\/~$\Delta$:
\begin{enumerate}[(a)]
\item
\label{defin_monot}
$\G \impdef \phi$
\qquad$\Longrightarrow$\qquad
$\G \cup \D \impdef \phi$.
\item
\label{lem:eq_def_trans}
$\G \impdef \D$\quad and\quad $\D \impdef \phi$
\qquad$\Longrightarrow$\qquad
$\G \impdef \phi$.
\item
\label{from_K_to_atom}
$K_a \phi  \impdef  p_a$.  
\item 
\label{lem:eq_atom_K}
$p_a \eqdef K_a p_a$.
\item 
\label{def:trans}
$K_a \phi \eqdef K_a K_a \phi$.
\item  
\label{def:eucl}
$K_a \phi \eqdef \lnot K_a \phi \eqdef K_a  \lnot K_a \phi$.
\item
\label{lem:if_phi_then_ka_phi}
$K_a \theta, \phi \impdef K_a \phi$.
\item
\label{def_raise_ka}
$ K_a \G, \phi \impdef K_a\psi$
\qquad$\Longrightarrow$\qquad
$K_a\G, K_a\phi \impdef K_a\psi$.
\item
\label{def_raise_ka_neg}
$ K_a \G, \lnot\phi \impdef K_a\psi$
\qquad$\Longrightarrow$\qquad
$K_a\G, \lnot K_a\phi \impdef K_a\psi$.
\item
\label{lem:def_tough_hatK}
$K_a (K_a \psi_1 \land \cdots \land K_a \psi_m) \imp K_a \theta \quad \impdef \quad K_a \bigl(K_a \psi_1 \land \cdots \land K_a \psi_m \imp  \theta\bigr)$.
\item
\label{lem:def_tough_hatKhat}
$K_a (K_a \psi_1 \land \cdots \land K_a \psi_m) \imp \M_a \theta \quad \impdef \quad K_a \bigl(K_a \psi_1 \land \cdots \land K_a \psi_m \imp  \theta\bigr)$.
\item 
\label{lem:def_tough_K}
$ K_a \psi_1 \land \cdots \land K_a \psi_m \quad \eqdef \quad K_a (K_a \psi_1 \land \cdots \land K_a \psi_m)$.
\end{enumerate} 
\end{restatable}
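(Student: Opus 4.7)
My plan is to unfold the definability conditions from Definition~\ref{def:defin} and Corollary~\ref{cor:simple_def} into their simplicial content, then repeatedly exploit one structural fact about the chromatic map: each simplex contains at most one $a$-coloured vertex, so whenever $a \in \chi(X \cap Y)$ and $a \in \chi(Y \cap Z)$, the $a$-vertex of $X$, $Y$, and $Z$ must coincide, forcing $a \in \chi(X \cap Z)$. Clause~\eqref{defin_monot} is immediate: reuse the same finite witness $\Gamma' \subseteq \Gamma \subseteq \Gamma \cup \Delta$. For clause~\eqref{lem:eq_def_trans} I pick a finite $\Delta' \subseteq \Delta$ witnessing $\Delta \impdef \phi$ and, for each $\delta \in \Delta'$, a finite $\Gamma_\delta \subseteq \Gamma$ witnessing $\Gamma \impdef \delta$; then $\bigcup_{\delta \in \Delta'} \Gamma_\delta$ is finite and witnesses $\Gamma \impdef \phi$ by chaining the two implications at any $\C, X$.

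\textbf{A preservation lemma I will extract first.} If $\C, X \isdef K_a \gamma$ and $a \in \chi(X \cap Y)$, then $\C, Y \isdef K_a \gamma$: unfolding yields a face $Z$ with $a \in \chi(X \cap Z)$ and $\C, Z \isdef \gamma$, and the chromatic-sharing fact above gives $a \in \chi(Y \cap Z)$. With this in hand, clause~\eqref{from_K_to_atom} is immediate (the witness $Y$ forces $a \in \chi(X)$); the converse of clause~\eqref{lem:eq_atom_K} follows by taking $Y \ce X$; both directions of clause~\eqref{def:trans} and the second equidefinability of~\eqref{def:eucl} reduce to the preservation lemma together with the $Y \ce X$ trick; and the first equidefinability of~\eqref{def:eucl} is immediate from $\C, X \isdef \neg \psi \Leftrightarrow \C, X \isdef \psi$. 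Clause~\eqref{lem:if_phi_then_ka_phi} is again by $Y \ce X$: definability of $K_a \theta$ forces $a \in \chi(X)$, so $X$ witnesses definability of $K_a \varphi$ whenever $\varphi$ is definable in $X$.

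\textbf{Clauses with sets of $K_a$-hypotheses.} For~\eqref{def_raise_ka}, pick a finite witness $\Sigma \cup \{\varphi\}$ with $\Sigma \subseteq K_a \G$ for the hypothesis; I claim $\Sigma \cup \{K_a \varphi\}$ witnesses the conclusion. Given definability of these premises at $X$, unfold $\C, X \isdef K_a \varphi$ to a face $Y$ with $a \in \chi(X \cap Y)$ and $\C, Y \isdef \varphi$, use the preservation lemma to transfer each $K_a \gamma \in \Sigma$ from $X$ to $Y$, apply the hypothesis at $Y$ to obtain $\C, Y \isdef K_a \psi$, and transfer back to $X$ once more via the preservation lemma. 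Clause~\eqref{def_raise_ka_neg} is identical upon noting that $K_a \varphi$ and $\neg K_a \varphi$ are equidefinable (as are $\varphi$ and $\neg \varphi$). Clause~\eqref{lem:def_tough_K} is easy in both directions: forward, take $Y \ce X$ (valid since $K_a \psi_1$ definable forces $a \in \chi(X)$); backward, unfold the outer $K_a$ and push each $\C, Y \isdef K_a \psi_i$ back to $X$ by the preservation lemma.

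\textbf{The main obstacle: clauses~\eqref{lem:def_tough_hatK} and~\eqref{lem:def_tough_hatKhat}.} Here I must build a single witness face for the RHS from two separate witnesses supplied by the LHS. Assuming $\C, X \isdef K_a(K_a \psi_1 \land \cdots \land K_a \psi_m) \imp K_a \theta$, the conditional is defined iff both sides are, so unfolding the two outer $K_a$'s yields some $Y$ with $a \in \chi(X \cap Y)$ and each $\C, Y \isdef K_a \psi_i$, and some $Z$ with $a \in \chi(X \cap Z)$ and $\C, Z \isdef \theta$. Chromatic sharing gives $a \in \chi(Y \cap Z)$, so the preservation lemma transfers each $\C, Y \isdef K_a \psi_i$ to $\C, Z \isdef K_a \psi_i$. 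Then $Z$ simultaneously witnesses definability of $K_a \psi_1, \ldots, K_a \psi_m$ and $\theta$, so $Z$ serves as the outer witness for the RHS $K_a(K_a \psi_1 \land \cdots \land K_a \psi_m \imp \theta)$ at $X$. Clause~\eqref{lem:def_tough_hatKhat} follows by the same argument since $\M_a \theta$ and $K_a \theta$ have identical definability by Corollary~\ref{cor:simple_def}. The crux throughout is the interplay between $a$-vertex sharing transitivity and the preservation lemma, which together let me relocate witness faces while keeping $K_a$-definability intact.
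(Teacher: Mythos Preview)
Your proposal is correct and follows essentially the same approach as the paper: both arguments repeatedly unfold Definition~\ref{def:defin} and exploit the uniqueness of the $a$-coloured vertex to transfer $a$-adjacency (your ``chromatic-sharing fact''). Your explicit extraction of the preservation lemma (definability of $K_a\gamma$ is invariant under moving to an $a$-adjacent face) is a mild organizational improvement over the paper, which re-derives this fact inline each time; otherwise the decomposition, the witness-transfer arguments for clauses~\eqref{def_raise_ka}--\eqref{lem:def_tough_K}, and the handling of the two-witness merge in~\eqref{lem:def_tough_hatK} all match.
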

\begin{proof}
The proofs are straightforward by using Definition~\ref{def:defin} and can be found in Appendix~\ref{b}. 
\end{proof}

A closer observation of impure simplicial models uncovers several peculiarities that affect not only the logic but also the notion of validity and even the object language used. 

For instance, if boolean constants $\top$ and $\bot$ are not chosen as primary connectives, it is typically convenient and harmless to add them to the language as abbreviations by choosing a~(fixed but arbitrary) propositional variable $p$ and using $p \lor \neg p$ for $\top$ and/or $p \land \neg p$ for $\bot$. This works if $p$ is a global variable. In our agent-oriented language, however, all propositional variables are local to some agent. Thus, $p_a \lor \neg p_a$ is not going to be true in all face(t)s, as one would expect from  boolean constant~$\top$. Instead, $p_a \lor \neg p_a$ is true whenever agent~$a$~is alive and undefined otherwise. We can use such a formula at most as a local boolean constant~$\top_a$, which cannot be false but need not be true.

In fact, no formula is true in all face(t)s. For any formula $\phi$, it is not hard to construct a single-vertex simplicial model where $\phi$ is undefined. In other words, were we to define validity as truth in all face(t)s of all models, the set of valid formulae would have been empty. On the other hand, it is reasonable to consider  $\top_a$ mentioned above valid in the sense that it is never false. It is this  notion of validity that we  axiomatize in this paper:
\begin{defi}[Validity \cite{vDitKuz22arXiv}]
\label{def:valid}
A formula $\phi$ is valid in the class of impure simplicial models, written\/ $\vDash \phi$,  if{f} for any face $X$ of any simplicial model~$\C$,
\[
\C, X \isdef \phi 
\qquad\Longrightarrow\qquad
\C, X \vDash \phi .
\]
\end{defi}

One might question why we define validity with respect to all faces even though it is facets that  represent global states. Fortunately, this point is moot  as the logic remains the same if the validity is restricted to facets.
\begin{lem}[Validity on facets]
\label{lem:val_facet}
A formula $\phi$ is true in all facets of all impure simplicial models where it is defined if{f}\/~$\vDash \phi$.
\end{lem}
\begin{proof}
The ``if'' direction trivially follows from Definition~\ref{def:valid} as valid formulae are true in all faces where they are defined, including all facets. For the ``only if'' direction, assume that~$\nvDash \phi$, \ie~there exists a face $X$ of a simplicial model $\C$ such that $\C, X \isdef \phi$ but $\C, X \nvDash \phi$, \ie~\mbox{$\C, X \vDash \neg \phi$}. By Lemma~\ref{lem:defMonotonicity}\ref{monot:true}, $\C, Y \vDash \neg \phi$ for any facet $Y \supseteq X$. By Remark~\ref{rem:simpmodels}\ref{rem:facet}, such a facet exists. In this facet,  $\C, Y \isdef \phi$ but $\C, Y \nvDash \phi$, meaning that $\phi$ is not valid with respect to facets either.
\end{proof}

As a matter of a sanity check, it is easy to see that all propositional tautologies are valid in the class of impure simplicial models. Indeed, whenever such a tautology is defined, boolean connectives behave the same way as in classical propositional logic.

\paragraph{Comparing three- and two-valued semantics for impure complexes}
Let us explain the difference between our approach and that of~\cite{GoubaultLR22} by the leftmost model in Figure~\ref{fig2}, which we  call $\C_3$. It consists of two facets, the 1-dimensional edge, let us call it~$X$, to the left and the two-dimensional gray triangle, let us call it $Y$, to the right, which are $a$-adjacent, signifying the fact that agent $a$ is not sure whether agent $c$ is alive (even though this cannot be expressed in the object language). 

The first observation to make is that our model $\C_3$ corresponds to multiple models in the semantics of~\cite{GoubaultLR22}. Indeed, where we treat any local propositional variable of agent~$c$ as undefined in edge $X$ because $c$ is dead there, the two-valued semantics requires to assign one of two truth values to each of $c$'s variables. This means that for a local state consisting of $n$~local variables, one model $\C_3$ corresponds to $2^n$ two-valued models that differ in the local state of the dead agent (infinitely many local variables would give rise to infinitely many distinct two-valued models). 

The differences among these two-valued models go beyond formulae undefined in~$\C_3$, however. The objective changes to the truth values of dead-agent variables result in subjective changes to the knowledge of live agents. In $\C_3$, agent $a$ knows that $p_c$ is true, \ie~$\C_3 \vDash K_a p_c$, because $p_c$ is true whenever defined. In the two-valued semantics of~\cite{GoubaultLR22}, by contrast, $a$~may or may not know $p_c$ to be true depending on whether $p_c$ is true at $X$, \ie~$\C'_2 \vDash K_a p_c$ while $\C''_2 \vDash \neg K_a p_c$ depending on how $p_c$ is evaluated at $X$, where $c$ is dead, in two-valued models~$\C'_2$~vs.~$\C''_2$.  Thus, a formula valid in three values may be validated or invalidated by the choice of truth values of dead-agent formulae in two values.

It can also happen that a formula valid in three values must be invalid in two values. Consider, for instance, a model $\C^{X-b}_3$ that differs from $\C_3$ only in the value of $p_b$ in edge $X$, \ie~$\C_3, X \vDash p_b$  but $\C^{X-b}_3, X \nvDash p_b$. Then we have $\C^{X-b}_3 \vDash K_a(p_b \land p_c)$ because $p_b \land p_c$ is true in the only facet it is defined, at $Y$. By contrast, $\C^{X-b}_2, X \nvDash p_b \land p_c$ no matter how $p_c$~is evaluated, resulting in $\C^{X-b}_2 \nvDash K_a(p_b \land p_c)$ for all two-valued models corresponding to $\C^{X-b}_3$.

\section{Axiom System \texorpdfstring{$\Sfive^{\isdef}$}{S5\unichar{"22C8}}}
\label{axiom_sys}

The main goal of this paper is to provide a sound and complete axiomatization for the notion of validity from Definition~\ref{def:valid}.  Since pure simplicial models correspond to Kripke models with equivalence accessibility relations, their logic is an extension of multimodal~$\Sfive$,  the epistemic logic where individual knowledge has both positive and negative introspection (the extension deals with the locality of propositional variables). The susceptibility of agents to crashes can be reasonably expected to weaken the logic somewhat but one could still expect some subsystem of $\Sfive$.

As we have discovered, the impact of dead agents and undefined formulae is quite drastic and, interestingly, affects the core reasoning, both modal and propositional,  rather than the additional epistemic properties. Although these results are from~\cite{vDitKuz22arXiv}, we find it instructive to repeat the proofs found there as it provides insight to the reader on why these axioms/rules  are invalid in their ``usual'' form.

Firstly, the \emph{normality axiom}~\textbf{K}  at the basis of all normal modal logics is not valid even for the above weaker notion of validity:

\begin{lem}[Failure of normality \cite{vDitKuz22arXiv}]
\label{ex:counterK}
In the presence of at least three agents, 
\[
 \nvDash K_a(\phi \to \psi) \to (K_a \phi \to K_a \psi).
\]
\end{lem}
\begin{figure}[ht]
\center
\begin{tikzpicture}[round/.style={circle,fill=white,inner sep=1}]
\fill[fill=gray!25!white] (2,0) -- (4,0) -- (3,1.71) -- cycle;
\node[round] (b1) at (0,0) {$0_b$};
\node[round] (b0) at (4,0) {$1_b$};
\node[round] (c1) at (3,1.71) {$1_c$};
\node[round] (a0) at (2,0) {$0_a$};
\node (f1) at (3,.65) {$Y$};

\draw[-] (b1) -- node[above] {$X$} (a0);
\draw[-] (a0) -- (b0);
\draw[-] (b0) -- (c1);
\draw[-] (a0) -- (c1);

\node(c) at (-1,.85) {$\C_{\mathbf{K}}:$};
\end{tikzpicture}
\caption{Counterexample for the unrestricted axiom \textbf{K}}
\label{fig:counterK}
\end{figure}
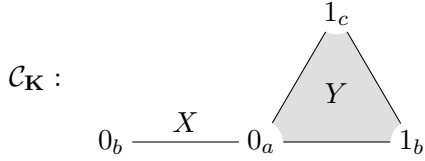
\begin{proof}
Consider model~$\C_{\mathbf{K}}$ with agents $a$, $b$, and $c$ in Figure~\ref{fig:counterK}. Let $\phi = p_c$ and $\psi = p_b$. Model~$\C_{\mathbf{K}}$ has only two facets, $X$ of dimension 1 and $Y$ of the maximal dimension~2. The  instance 
\begin{equation}
\label{eq:badK}
K_a(p_c \to  p_b) \to (K_a p_c \to K_a  p_b)
\end{equation}
 of axiom~\textbf{K} is defined in both facets because all formulae are defined in $Y$, which is $a$-adjacent to all facets. In addition, $p_c$ is true wherever it is defined, making $K_a p_c$  true in both facets. Similarly, $p_c \to p_b$ is true wherever it is defined, \ie~wherever both $b$ and $c$ are alive, because $p_b$~is only false in $X$ where $c$ is dead. This ensures that $K_a( p_c \to p_b)$ is true in both facets. At the same time, $\C_{\mathbf{K}}, X \vDash \neg p_b$ and, hence, $K_a p_b$ is false in both facets. Thus, not only is  instance~\eqref{eq:badK} of axiom \textbf{K} falsified in this model, but, in fact, its negation is valid in~$\C_{\mathbf{K}}$.
\end{proof}

Perhaps, even more surprisingly, even the \emph{modus ponens} \textbf{MP} rule, a staple of most logics, does not generally preserve validity. In~\cite{vDitKuz22arXiv} (though not in~\cite{vDit21WoLLIC}), we presented a counterexample to this effect that involved four agents. Here we offer a mild improvement and demonstrate that already for three agents, \textbf{MP} does not preserve validity:
\begin{lem}[Failure of modus ponens]
\label{ex:noMP}
In the presence of at least three agents, there exist formulae~$\phi$~and~$\psi$ such that 
\[
\vDash \phi \to \psi \quad \text{and} \quad \vDash \phi,\qquad \text{but}\qquad \nvDash \psi.
\]
\end{lem}
\begin{proof}
Our counterexample to {modus ponens}  preserving validity uses
 $\phi = \top_a \land  \top_b $ and $\psi = \M_c \M_a\M_b p_c \lor \M_c \M_b \M_a \lnot p_c$. 
We will prove the validity of $\phi$ and of $\phi\to\psi$, \ie~that
	\begin{align}
\label{eq:one}
\vDash\quad&
 {\top_a \land  \top_b},
\\
\label{eq:two}
\vDash\quad&
 \top_a \land   \top_b 
\imp 
\M_c \M_a\M_b p_c \lor \M_c \M_b \M_a \lnot p_c.
\end{align}
but that applying modus ponens to these valid formulae does not produce a valid formula. For the latter part,
consider simplicial model~$\C_{\mathbf{MP}}$ with agents $a$, $b$, and $c$ in Figure~\ref{fig:counterMP}, which has six facets, all  of non-maximal dimension~$1$. 
To show that  $\nvDash \M_c \M_a\M_b p_c \lor \M_c \M_b \M_a \lnot p_c$, we demonstrate
\begin{equation*}
\label{eq:three}
\C_{\textbf{MP}}, X\vDash\quad \neg\left(
\M_c \M_a\M_b p_c \lor \M_c \M_b \M_a \lnot p_c\right).
\end{equation*}

It remains to prove the former part, \ie~validities~\eqref{eq:one}~and~\eqref{eq:two}.
The validity of formula~$\phi$ in~\eqref{eq:one} immediately follows from our earlier discussion that local boolean constants \mbox{$\top_e \ce p_e \lor \neg p_e$} cannot be falsified. Let us show that formula $\phi \to \psi$ in \eqref{eq:two} is valid. Assume that $\phi \to \psi$ is defined in a face $Q$ of a simplicial model  $\C$. This necessitates that agents~$a$~and~$b$ be alive due to $\top_a$ and $\top_b$, as well as agent~$c$ being alive due to  modalities~$\M_c$. In other words, $\phi \to \psi$ is only defined in $Q$ if $Q$ is a face with $a$, $b$, and $c$ all being alive. 
In particular, variable~$p_c$ is defined, \ie~is either true or false in $Q$. Since face $Q$ is $a$-adjacent, $b$-adjacent, and \mbox{$c$-adjacent} to itself, if $p_c$ is true in~$Q$, so is $\M_c \M_a\M_b p_c$, while if $\lnot p_c$ is true in~$Q$, so is~$\M_c \M_b \M_a \lnot p_c$. This completes the proof that $\top_a \land   \top_b
\imp 
\M_c \M_a\M_b p_c \lor \M_c \M_b \M_a \lnot p_c$ is true whenever defined.

Let us now show that $\psi = \M_c \M_a\M_b p_c \lor \M_c \M_b \M_a \lnot p_c$ is invalidated in  model~$\C_{\mathbf{MP}}$. 
For the second disjunct $\M_c \M_b \M_a \lnot p_c$ evaluated in facet $X$, out of three faces $c$-adjacent to~$X$, \ie~$X$, $Y$, and their common vertex, agent $b$ is alive only in $Y$, hence, the definability/truth of $\M_c \M_b \M_a \lnot p_c$ in $X$ is fully determined by definability/truth of  $\M_b \M_a \lnot p_c$ in $Y$.  
Out of three faces $b$-adjacent to $Y$, \ie~$Y$, $V$, and their common vertex, agent $a$ is alive only in~$V$, hence, the definability/truth of~$\M_b \M_a \lnot p_c$ in $Y$ is fully determined by definability/truth of~$\M_a \lnot p_c$ in~$V$. 
Out of three faces $a$-adjacent to $V$, \ie~$V$, $U$, and their common vertex, $\lnot p_c$ is defined only in $U$, hence, $ \M_a \lnot p_c$ is defined at $V$ and its truth is fully determined by the truth of~$\lnot p_c$ in~$U$. 
Since $\lnot p_c$ is false in $U$, it follows that $\M_a \lnot p_c$ is false in $V$, that $\M_b \M_a \lnot p_c$ is defined and false in $Y$, and that $\M_c \M_b \M_a \lnot p_c$ is defined and false in $X$. 
The argument for the first disjunct is similar, except the chain of faces there is $X$, followed by $X$, followed by~$W$, and finally followed by $Z$.
 In summary, $\M_c \M_a\M_b p_c \lor \M_c \M_b \M_a \lnot p_c$ is defined but false in~$X$.
 
 \begin{figure}[t]
\center
\begin{tikzpicture}[round/.style={circle,fill=white,inner sep=1}]
\node[round] (b1) at (0,0) {$a$};
\node[round] (b0) at (4,0) {$b$};
\node[round] (d0) at (6,0) {$0_c$};
\node[round] (d1) at (-2,0) {$1_c$};
\node[round] (c1) at (3,1.71) {$a$};
\node[round] (lc1) at (1,1.71) {$b$};
\node[round] (a0) at (2,0) {$c$};
\node (f1) at (2.3,.9) {$X$};
\node (f2) at (1.7,.9) {$Y$};
\node (f3) at (3.75,.9) {$W$};
\node (f4) at (0.35,.9) {$V$};
\node (f5) at (5,.2) {$Z$};
\node (f5) at (-1,.2) {$U$};

\draw[-] (b1) -- (d1);
\draw[-] (b1) -- (lc1);
\draw[-] (a0) -- (lc1);
\draw[-] (d0) -- (b0);
\draw[-] (b0) -- (c1);
\draw[-] (a0) -- (c1);

\node(c) at (-3,.85) {$\C_{\textbf{MP}}:$};
\end{tikzpicture}
\caption{Counterexample for the unrestricted {modus ponens}}
\label{fig:counterMP}
\end{figure}
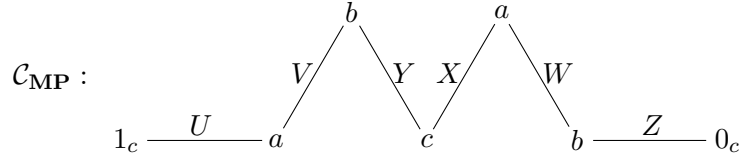

 It is interesting to note that  implication~\eqref{eq:two} is nowhere defined in $\C_{\mathbf{MP}}$.
\end{proof}

Intuitively, the failure of both \textbf{K} and \textbf{MP} can be attributed to the fact that different instances of modality in impure simplicial models may have different scope in terms of adjacent faces, which prevents the usual first-order reasoning about the adjacency. One way to amend this is by imposing  definability restrictions so as to ensure the homogeneity of the scope of modal operators. 

\begin{defi}[Axioms system $\Sfive^{\isdef}$]
\label{fig:sfiveimp}
The axioms and rules of axiom system\/~$\Sfive^{\isdef}$ can be found below (non-standard axioms are in the right column):
\[
\begin{array}{cl|cll}
\mathbf{Taut} & 
	\text{prop. tautologies} 
	& 
	\mathbf{L} & 
	K_a p_a \lor K_a \lnot p_a
	   \\
\mathbf{T} &  
	K_a \phi \imp \phi & \mathbf{K}^{\bowtie} &
	K_a(\phi \imp \psi) \imp (K_a \phi \imp K_a \psi) & \text{ where }\psi, K_a \phi  \impdef \phi\\
	\mathbf{4} & 
	K_a \phi \imp K_a K_a \phi 
	&
	\mathbf{K\M} & 
	K_a(\phi \imp \psi) \imp (K_a \phi \imp \M_a \psi) \\
\mathbf{5} & 
	\lnot K_a \phi \imp K_a\lnot K_a\phi 
	&\mathbf{MP}^{\bowtie}  & 
	\text{ from } \phi \imp \psi \text{ and } \phi, \text{ infer } \psi
	&\text{ where } \psi \impdef \phi\\ 
\mathbf{N} & 
	\text{ from } \phi, \text{ infer } K_a \phi
\\
\end{array}
\]
\end{defi}

Thus,  axiom system $\Sfive^{\isdef}$ is obtained by 
\begin{itemize}
\item taking all axioms and rules of multimodal $\Sfive$   that are valid for impure simplicial models (see the left column in Definition~\ref{fig:sfiveimp}), 
\item adopting the locality axiom \textbf{L} used for pure simplicial models, 
\item restricting \textbf{K} and \textbf{MP} by imposing definability conditions, and
\item  adding axiom $\mathbf{K\M}$.
\end{itemize}
This last addition is needed for the completeness proof. Intuitively, it can be explained by the dual universal--existential nature of  modalities $K_a$. In impure semantics, knowledge of~$\phi$ means both truth in \emph{all} adjacent faces where $\phi$ is defined (universal aspect) and truth in \emph{at~least~one} adjacent face (existential aspect). Just like its unrestricted variant~\textbf{K}, axiom~$\mathbf{K}^{\bowtie}$ deals only with the universal aspect. The new axiom $\mathbf{K\M}$ is needed to take care of the existential one.

\begin{rem}
Logic~$\Sfive^{\isdef}$ does not satisfy the property of uniform substitution.
 Note that agent $a$'s local propositional variable $p_a$ in axiom~$\mathbf{L}$ cannot be replaced with propositional variable $q_b$ of some other agent $b$, let alone with an arbitrary formula. Indeed, $K_a q_b \vee K_a \neg q_b$ for $a \ne b$ is not generally valid, and assuming its validity would go against the ideology of distributed systems where agents are autonomous and do not generally know each other's local state. This lack of closure with respect to substitution is yet another of the multiple ways logic~$\Sfive^{\isdef}$ fails to be a normal modal logic, albeit in this case inherited from the logic of pure simplicial complexes.
\end{rem}

\section{Soundness}
\label{soundness}

\begin{defi}[Derivability in $\Sfive^{\isdef}$]
\label{def:deriv}
A formula $\phi$ is \emph{derivable} in\/~$\Sfive^{\isdef}$, written\/ $\vdash \phi$, if{f} there is a finite sequence $\psi_1,\dots,\psi_k$ of formulae such that $\psi_k = \phi$ and each formula in the sequence is either an instance of one of the axioms in Definition~\ref{fig:sfiveimp} or follows from earlier formulae in the sequence by one of the rules from Definition~\ref{fig:sfiveimp}. A formula $\phi$ is derivable in\/~$\Sfive^{\isdef}$ from a set\/~$\Gamma$ of formulae, written\/ $\Gamma \vdash \phi$, if{f} either\/ $\vdash \phi$ or\/ $\vdash \bigwedge \Gamma_0 \imp \phi$ for some finite non-empty subset\/ $\Gamma_0 \subseteq \Gamma$.
\end{defi}

\begin{rem}
The proviso that $\Gamma_0 \ne \varnothing$ and the special case for  derivations from the empty set are necessary due to the lack of boolean constants in the language, which makes it problematic to interpret the empty conjunction.
\end{rem}

\begin{restatable}[Soundness]{thm}{soundimp}
\label{th:soundimp}
Axiom system\/~$\Sfive^{\isdef}$ is sound with respect to~impure simplicial models:
\[
\vdash \phi\qquad \text{implies}\qquad \vDash \phi.
\]
\end{restatable}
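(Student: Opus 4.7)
The plan is to proceed by induction on the length of a derivation of~$\phi$, establishing (i)~that every axiom instance is valid in the sense of Def.~\ref{def:valid}, and (ii)~that each rule preserves validity. For the classical modal axioms~\textbf{T}, \textbf{4}, \textbf{5}, the locality axiom~\textbf{L}, the propositional tautologies~\textbf{Taut}, and axiom~$\mathbf{K\M}$, the argument follows the familiar Kripke-style pattern, with the standing caveat that one must first verify definability at the face~$X$ under consideration and then invoke Lemma~\ref{lem:threevalues} and Corollary~\ref{cor:simple_def}. The locality axiom~\textbf{L} relies on the fact that every face $a$-adjacent to~$X$ shares the unique $a$-vertex of~$X$ and hence agrees on~$p_a$ via~$\ell$. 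Axioms~\textbf{4} and~\textbf{5} exploit the same uniqueness-of-$a$-vertex principle to propagate $a$-adjacency across chains of faces. For~$\mathbf{K\M}$, the definability of $K_a(\varphi \imp \psi)$ at~$X$ supplies an $a$-adjacent witness~$Y_0$ where both~$\varphi$ and~$\psi$ are defined; truth of~$K_a \varphi$ at~$X$ then forces~$\varphi$ true at~$Y_0$, and $\varphi \imp \psi$ true at~$Y_0$ yields~$\psi$ true at~$Y_0$, establishing~$\M_a \psi$ at~$X$.

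The heart of the matter is~$\mathbf{K}^{\bowtie}$. Suppose its instance is defined at~$X$ and both $K_a(\varphi \imp \psi)$ and~$K_a \varphi$ are true at~$X$. To show $K_a \psi$ true at~$X$, take any~$Z$ with $a \in \chi(X \cap Z)$ and $\C, Z \isdef \psi$. A witness~$Y$ for the definability of~$K_a \varphi$ at~$X$ is $a$-adjacent to~$X$, and the uniqueness of the $a$-vertex of~$X$ gives $a \in \chi(Y \cap Z)$, so~$K_a \varphi$ is defined at~$Z$. The side-condition $\psi, K_a \varphi \impdef \varphi$ then forces~$\varphi$ defined at~$Z$, whence truth of~$K_a \varphi$ at~$X$ yields~$\varphi$ true at~$Z$. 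Since both~$\varphi$ and~$\psi$ are defined at~$Z$, so is $\varphi \imp \psi$, and truth of $K_a(\varphi \imp \psi)$ at~$X$ gives $\varphi \imp \psi$ true at~$Z$; classical two-valued propositional reasoning within~$Z$ then delivers~$\psi$ true at~$Z$, as required. An existential witness for the definability of~$K_a \psi$ at~$X$ is already available from the hypothesis that the axiom instance is defined at~$X$.

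For the rules, $\mathbf{MP}^{\bowtie}$ uses $\psi \impdef \varphi$ directly: wherever~$\psi$ is defined, so is~$\varphi$, hence so is $\varphi \imp \psi$, and the validity of the two premises forces~$\psi$ true there. For~$\mathbf{N}$, if~$K_a \varphi$ is defined at~$X$ and~$\varphi$ is valid, the definability of~$K_a \varphi$ supplies an $a$-adjacent witness, and validity of~$\varphi$ handles every $a$-adjacent~$Z$ where~$\varphi$ is defined, yielding $\C, X \vDash K_a \varphi$. The main obstacle is the bookkeeping for~$\mathbf{K}^{\bowtie}$: three separate definability obligations on~$\varphi$, $\psi$, and~$K_a \varphi$ at the same face~$Z$ must be simultaneously aligned by combining the side-condition with the geometric uniqueness of agent colourings within any simplex; everything else is a direct three-valued adaptation of standard epistemic soundness.
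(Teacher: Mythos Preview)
Your proposal is correct and follows essentially the same route as the paper: induction on derivations, checking each axiom scheme is valid and each rule preserves validity, with the crucial step for $\mathbf{K}^{\bowtie}$ being exactly the combination of the side-condition $\psi, K_a\varphi \impdef \varphi$ with the uniqueness of the $a$-vertex in $X$ to transfer definability of $K_a\varphi$ from $X$ to an arbitrary $a$-adjacent $Z$ where $\psi$ is defined. The only cosmetic difference is that the paper obtains the existential witness for $K_a\psi$ by reusing the $\mathbf{K\M}$ argument, whereas you read it off directly from $K_a\psi$ being a subformula of the (defined) axiom instance; both are equally valid since the universal clause you establish then upgrades that defined witness to a true one.
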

\begin{proof}
The proof by induction on an $\Sfive^{\isdef}$-derivation is standard: one shows that all axioms are valid and all rules preserve validity. It has already been shown in~\cite{vDitKuz22arXiv} that the locality axiom~$\mathbf{L}$ and all axioms from the left column of Definition~\ref{fig:sfiveimp} are valid and that the necessitation rule~$\mathbf{N}$ preserves validity. For  readers' convenience, we reproduce these proofs in Appendix~\ref{b}. Here we provide  the argument for the remaining two axioms and  rule:
\begin{itemize}
\item Axiom $\mathbf{K\M}$ is valid.
	Consider an arbitrary simplicial model $\C = (C,\chi, \ell)$ and its face~\mbox{$X \in C$}. If one of formulae $K_a(\phi \imp \psi)$ and $K_a \phi$ is undefined,  axiom $\mathbf{K\M}$ is undefined. If both formulae are defined but one of them is false,   $\mathbf{K\M}$ is true. Finally,  assume that \mbox{$\C,X \vDash K_a(\phi \imp \psi)$} and $\C,X \vDash K_a \phi$, 
	Then,  $\C, X \isdef K_a(\phi \imp \psi)$ by Lemma~\ref{lem:threevalues}\ref{clause:threevalues_one}, \ie~$\C, Y \isdef \phi \imp \psi$ for some face~$Y$ such that $a \in \chi(X \cap Y)$, where $\C, Y \isdef \phi$ and $\C, Y \isdef \psi$ by the definition of~$\isdef$. Since both $\phi \to \psi$ and $\phi$ are defined in $Y$, they must be true by our assumptions, \ie~$\C,Y \vDash \phi \to \psi$ and 
 $\C,Y \vDash \phi$. Hence, $\C,Y \vDash \psi$ and $\C,X \vDash \M_a \psi$. 
	We conclude that in this final  case again $\C,X \vDash K_a(\phi \imp \psi) \imp (K_a \phi \imp \M_a \psi)$.
	
\item Axiom $\mathbf{K}^{\isdef}$ is valid, provided $\psi, K_a \phi \impdef \phi$. 
From now on, we omit all trivial cases and assume that $\C,X \vDash K_a(\phi \imp \psi)$ and $\C,X \vDash K_a \phi$. As already shown for axiom $\mathbf{K\M}$, the former assumption implies that
 $\psi$ is defined in some face $a$-adjacent to face $X$. It remains to show that $\psi$ is true in all such $a$-adjacent faces where it is defined. Consider any face~$Z$ such that $a \in \chi(X \cap Z)$ and  $\C, Z \isdef \psi$. Before we can apply our assumptions to~$Z$, we have to make sure that  both $\phi \to \psi$ and~$\phi$ are defined in $Z$ using the definability provision of~$\mathbf{K}^{\isdef}$.
  $\C,X \vDash K_a \phi$ means that $\phi$ is defined in at least one face $Y$ $a$-adjacent to~$X$. But $Y$ being $a$-adjacent to~$X$ is equivalent to $Y$ being $a$-adjacent to $Z$ because $a \in \chi(X \cap Z)$. Hence, $\C,Z \isdef K_a \phi$. Therefore,  provision $\psi, K_a \phi \impdef \phi$ ensures that $\C, Z \isdef \phi$ and, consequently, $\C, Z \isdef \phi \to \psi$. Now we can use the assumptions to obtain  $\C,Z \vDash\phi $ and  $\C,Z \vDash \phi \to \psi$, meaning that $\C,Z \vDash\psi $. This concludes the proof that $\C,X \vDash K_a \psi$ and, hence, of the validity of $\mathbf{K}^{\isdef}$.

\item	Rule $\mathbf{MP}^{\isdef}$ preserves validity, provided  $\psi \impdef \phi$. We assume that $\vDash \phi \to \psi$ and $\vDash \phi$ and need to show that $\psi$ is true whenever defined. The definability provision means that $\phi$~is defined whenever $\psi$ is, hence, so is $\phi \to \psi$. Their validity means that $\phi$ and $\phi\to\psi$~are true whenever $\psi$ is defined, and from that we conclude that $\psi$ is true whenever it is defined.
	\qedhere
	\end{itemize}
\end{proof}

\begin{rem}
While all the standard epistemic axioms are valid, the meaning of this validity sometimes defies the usual intuitions. For instance, axiom \textbf{T} is often called the \emph{factivity of knowledge} and normally states that whatever is known must be true. This interpretation is, however, too strong for impure simplicial models and is not entailed by the validity~$\vDash K_a \phi \to \phi$. Indeed, for model~$\C_{\mathbf{K}}$ from Figure~\ref{fig:counterK}, $\C_{\mathbf{K}}, X \vDash K_a p_c$ even though formula $p_c$ is undefined rather than true in facet~$X$. In  our three-valued semantics, the validity of~\textbf{T} means  `known facts cannot be false'  rather than   `known facts must be true.' In this context, perhaps, it is better described as the \emph{veracity of knowledge}. This is why we may still see logic~$\Sfive^{\isdef}$ as an epistemic logic (see \cite{vDitKuz22arXiv} for an extensive discussion on the epistemic interpretation).
\end{rem}

In preparation for the completeness proof,  we now illustrate how the impurity and its associated definability restrictions affect standard modal (and propositional) reasoning techniques. 

\begin{lem}[Hypothetical syllogism]
	\label{lem:condHS}
	 If $\phi \imp \rho\impdef \psi$, the following rule is derivable
	\[
	\displaystyle\trule{\vdash \phi \imp \psi}{\vdash \psi \imp \rho}{ \vdash \phi \imp \rho}.
	\]
\end{lem}
\begin{proof}
\hfill
	\begin{enumerate}[1.]
		\item $(\phi \imp \psi) \imp \bigl((\psi \imp \rho) \imp (\phi \imp \rho)\bigr)$ 
		\hfill tautology
		\item $\phi \imp \psi$
		\hfill derivable by assumption
		\item $\psi \imp \rho$
		\hfill derivable by assumption
		\item $(\psi \imp \rho) \imp (\phi \imp \rho)$
		\hfill \textbf{MP}$^{\bowtie}$(1.,2.) as $(\psi \imp \rho) \imp (\phi \imp \rho) \impdef \phi \imp \psi$
		\item $\phi \imp \rho$
		\hfill \textbf{MP}$^{\bowtie}$(3.,4.) as $\phi \imp \rho \impdef \psi$ and, hence, $\phi \imp \rho \impdef \psi \imp \rho$
	\end{enumerate}
\end{proof}

\begin{restatable}[Contraposition]{lem}{contrapos}
	\label{lem:contrapos}
	The following rule is derivable:
	\[
	\displaystyle\orule{\vdash \phi \imp \psi}{ \vdash \lnot \psi \imp \lnot \phi}.
	\] 
\end{restatable}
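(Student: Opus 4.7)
The plan is to mimic the classical derivation of contraposition while carefully discharging the definability side condition attached to $\mathbf{MP}^{\isdef}$. The backbone is the propositional tautology $(\varphi \imp \psi) \imp (\lnot \psi \imp \lnot \varphi)$, which is available via $\mathbf{Taut}$. Together with the assumed derivation of $\varphi \imp \psi$, a single application of modus ponens should deliver $\lnot \psi \imp \lnot \varphi$, provided the restriction on $\mathbf{MP}^{\isdef}$ is met.

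Concretely, I would write the three-line derivation: (1) the tautology above; (2) $\varphi \imp \psi$ by hypothesis; (3) $\lnot \psi \imp \lnot \varphi$ by $\mathbf{MP}^{\isdef}$ applied to lines (1) and (2). The side condition I must verify is $\lnot \psi \imp \lnot \varphi \impdef \varphi \imp \psi$. Both formulas are boolean combinations of $\varphi$ and $\psi$, and by the observation immediately following Corollary~\ref{cor:simple_def} (that classical connectives are defined exactly when their constituents are), each of $\lnot \psi \imp \lnot \varphi$ and $\varphi \imp \psi$ is defined in a face $X$ if and only if both $\varphi$ and $\psi$ are defined in $X$. Thus the two formulas are in fact equidefinable, and the required definability consequence holds trivially.

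There is no real obstacle here: the only subtlety specific to the three-valued setting is the bookkeeping of the $\impdef$ side condition on $\mathbf{MP}^{\isdef}$, and this is immediate because passing from $\varphi \imp \psi$ to its contrapositive $\lnot \psi \imp \lnot \varphi$ does not add or remove any propositional variables or modalities, so the definability conditions on the two formulas coincide. In particular, I do not need the more delicate hypothetical-syllogism pattern of Lemma~\ref{lem:condHS}, since the ``middle'' formula to which $\mathbf{MP}^{\isdef}$ is applied is the premise $\varphi \imp \psi$ itself, whose definability is matched by that of the conclusion.
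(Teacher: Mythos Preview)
Your proposal is correct and follows essentially the same approach as the paper: the paper's proof is exactly the three-line derivation you describe, using the tautology $(\varphi \imp \psi) \imp (\lnot \psi \imp \lnot \varphi)$, the assumption, and $\mathbf{MP}^{\isdef}$ justified by $\lnot \psi \imp \lnot \varphi \impdef \varphi \imp \psi$.
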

\begin{proof}
See Appendix~\ref{b}.
\end{proof}

\begin{restatable}[Replacement for $K_a$]{lem}{kamonot}
	\label{lem:syl5}
	If   $\psi , K_a \phi \impdef \phi$, the following rule is derivable:
	\[\displaystyle \orule{\vdash\phi \imp \psi}{\vdash K_a \phi \imp K_a \psi}.\]
\end{restatable}
\begin{proof}
See Appendix~\ref{b}.
\end{proof}

\begin{restatable}[Partial normality of $K_a$ with respect to conjunction]{lem}{kaconj}
	\label{lem:ConjOfKaImpKaOfConj}
	\begin{align*}
		&\vdash K_a \phi \land K_a \psi  \to K_a(\phi \land \psi)
		\\
	&\nvdash    K_a(\phi \land \psi) \to K_a \phi \land K_a \psi.
	\end{align*}
\end{restatable}
\begin{proof}
See Appendix~\ref{b}.
\end{proof}

As desired, in this logic, agents know the truth value of their local variables:
\begin{restatable}[Knowledge of local variables]{lem}{localknowledge}
	\label{lem:locality_positive}
	\label{lem:locality_negative}
	\begin{align*}
	&\vdash \phantom{\lnot}p_a \imp K_a p_a
	\\
	&\vdash \lnot p_a \imp K_a \lnot p_a.
	\end{align*}
\end{restatable}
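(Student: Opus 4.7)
The plan is to prove both claims using a uniform propositional skeleton that combines the locality axiom $\mathbf{L}$ with an instance of the truth axiom $\mathbf{T}$ and discharges the two antecedents through successive applications of the restricted modus ponens $\mathbf{MP}^{\isdef}$.

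For $\vdash p_a \imp K_a p_a$, I would start from the classical propositional tautology $(A \lor B) \imp \bigl((B \imp \lnot C) \imp (C \imp A)\bigr)$, instantiated with $A := K_a p_a$, $B := K_a \lnot p_a$, $C := p_a$. This yields the $\mathbf{Taut}$-instance $(K_a p_a \lor K_a \lnot p_a) \imp \bigl((K_a \lnot p_a \imp \lnot p_a) \imp (p_a \imp K_a p_a)\bigr)$. Two applications of $\mathbf{MP}^{\isdef}$, first discharging the $\mathbf{L}$-antecedent $K_a p_a \lor K_a \lnot p_a$ and then the $\mathbf{T}$-antecedent $K_a \lnot p_a \imp \lnot p_a$, deliver the conclusion $p_a \imp K_a p_a$. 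The companion claim $\vdash \lnot p_a \imp K_a \lnot p_a$ follows from the dual skeleton based on the tautology $(A \lor B) \imp \bigl((A \imp C) \imp (\lnot C \imp B)\bigr)$ with the same instantiation and the other $\mathbf{T}$-instance $K_a p_a \imp p_a$, again stripped off by two applications of $\mathbf{MP}^{\isdef}$.

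The only nontrivial aspect, and therefore the main obstacle, is discharging the definability side-conditions $\psi \impdef \varphi$ attached to each application of $\mathbf{MP}^{\isdef}$. This is ultimately routine: by Lemma~\ref{lem:eq_definabilities}\eqref{lem:eq_atom_K} we have $p_a \eqdef K_a p_a$, and an analogous unfolding of Def.~\ref{def:defin} shows that $K_a \lnot p_a$ is likewise defined at a face $X$ exactly when $a \in \chi(X)$, whence $p_a \eqdef K_a \lnot p_a$ as well. Every formula occurring in either derivation is a boolean combination of $p_a$, $K_a p_a$, and $K_a \lnot p_a$, and is therefore defined at a face if and only if $p_a$ is. Consequently each required definability consequence collapses to this common equidefinability and holds automatically, so no further work is needed to justify the two $\mathbf{MP}^{\isdef}$-steps in either derivation.
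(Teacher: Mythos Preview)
Your proposal is correct and takes essentially the same approach as the paper: combine axiom~$\mathbf{L}$ with an instance of axiom~$\mathbf{T}$ via propositional reasoning, discharging the $\mathbf{MP}^{\isdef}$ side-conditions by observing that every formula involved is a boolean combination of $p_a$, $K_a p_a$, and $K_a\lnot p_a$, all of which are defined exactly when $a\in\chi(X)$. The paper packages the propositional part through the Contraposition and Hypothetical Syllogism lemmas rather than a single tautology, but this is a cosmetic difference.
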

\begin{proof}
\hfill
	\begin{enumerate}[1.]
		\item $K_a p_a \to p_a$ \hfill axiom \textbf{T}
		\item $\lnot p_a \imp \lnot K_a p_a$
		\hfill Lemma~\ref{lem:contrapos}(1.)
		\item $K_a p_a \lor K_a \lnot p_a$
		\hfill axiom \textbf{L}
		\item $(K_a p_a \lor K_a \lnot p_a) \imp (\lnot K_a p_a \imp K_a \lnot p_a)$
		\hfill tautology
		\item $\lnot K_a p_a \imp K_a \lnot p_a$
		\hfill MP$^{\isdef}$(3.,4.) as $\lnot K_a p_a \imp K_a \lnot p_a \impdef K_a p_a \lor K_a \lnot p_a$
		\item $\lnot p_a \imp  K_a \lnot p_a$
		\hfill Lemma~\ref{lem:condHS}(2.,5.) as $\lnot p_a \imp K_a \lnot p_a \impdef \lnot K_a p_a$
	\end{enumerate}
	This derives  the second formula. The first one can be derived analogously.
\end{proof}

\begin{restatable}[Deduction theorem]{thm}{deductth}
\[
\G, \phi \vdash \psi 
\qquad \Longrightarrow \qquad
\G \vdash \phi \imp \psi.
\]
\end{restatable}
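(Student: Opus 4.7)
The plan is to unfold the definition of $\G, \varphi \vdash \psi$ from Def.~\ref{def:deriv}, which gives two cases: either (i) $\vdash \psi$ outright, or (ii) $\vdash \bigwedge \Sigma \imp \psi$ for some finite non-empty $\Sigma \subseteq \G \cup \{\varphi\}$. In each case I will combine a propositional tautology (instance of $\mathbf{Taut}$) with the rule $\mathbf{MP}^{\bowtie}$ to produce a witness for $\G \vdash \varphi \imp \psi$, namely either $\vdash \varphi \imp \psi$ or $\vdash \bigwedge \Delta \imp (\varphi \imp \psi)$ for some finite non-empty $\Delta \subseteq \G$.

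In case~(i), I use the tautology $\psi \imp (\varphi \imp \psi)$ and apply $\mathbf{MP}^{\bowtie}$ with minor premise $\psi$ to conclude $\varphi \imp \psi$; the side condition $(\varphi \imp \psi) \impdef \psi$ is immediate since $\varphi \imp \psi$ can be defined only if $\psi$ is. In case~(ii) I split on whether $\varphi \in \Sigma$. If $\varphi \notin \Sigma$, then $\Sigma \subseteq \G$, and I apply $\mathbf{MP}^{\bowtie}$ to $\vdash \bigwedge \Sigma \imp \psi$ together with the tautology $(\bigwedge \Sigma \imp \psi) \imp (\bigwedge \Sigma \imp (\varphi \imp \psi))$, taking $\Delta \ce \Sigma$. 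If $\Sigma = \{\varphi\}$, the hypothesis is already $\vdash \varphi \imp \psi$. Otherwise I write $\Sigma = \Sigma' \cup \{\varphi\}$ with non-empty $\Sigma' \subseteq \G$ and apply $\mathbf{MP}^{\bowtie}$ to the tautology $((\bigwedge \Sigma' \land \varphi) \imp \psi) \imp (\bigwedge \Sigma' \imp (\varphi \imp \psi))$ and the hypothesis (after a small tautological rearrangement of conjuncts, as $\bigwedge$ over a set is order-sensitive), obtaining $\vdash \bigwedge \Sigma' \imp (\varphi \imp \psi)$ and taking $\Delta \ce \Sigma'$.

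The principal routine verification throughout is the definability side condition ``conclusion $\impdef$ minor premise'' for each application of $\mathbf{MP}^{\bowtie}$. In every invocation, the conclusion and the minor premise either have identical definability profiles -- both defined exactly when all of $\Sigma$ (or $\Sigma'$), $\varphi$, and $\psi$ are defined at the face in question, as happens when $\varphi \in \Sigma$ -- or the conclusion has strictly more definability requirements than the minor premise, as happens when $\varphi$ is newly introduced on the right. Either way, the conclusion's definability entails the minor premise's, so $\mathbf{MP}^{\bowtie}$ applies. I do not anticipate a serious obstacle: the argument reduces to ordinary propositional manipulation inside $\mathbf{Taut}$, and the only nonstandard wrinkle -- the definability proviso of $\mathbf{MP}^{\bowtie}$ -- is met because currying a conjunct across an implication arrow, $(\bigwedge \Sigma' \land \varphi) \imp \psi \ \leftrightarrow\ \bigwedge \Sigma' \imp (\varphi \imp \psi)$, does not alter which faces the compound formula is defined at.
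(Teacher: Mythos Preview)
Your proposal is correct and follows essentially the same case split as the paper: (i)~$\vdash\psi$; (ii)~the witnessing finite subset is $\{\varphi\}$; (iii)~it lies in~$\G$; (iv)~it is $\G_0\cup\{\varphi\}$ with $\varnothing\ne\G_0\subseteq\G$. Your handling of case~(iii) is in fact slightly more direct than the paper's (one application of $\mathbf{MP}^{\bowtie}$ to the tautology $(\bigwedge\Sigma\to\psi)\to(\bigwedge\Sigma\to(\varphi\to\psi))$, versus the paper's detour through $\bigwedge\G_0\land\varphi\to\psi$ via hypothetical syllogism before currying), but the overall structure and the definability checks are the same.
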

\begin{proof}
See Appendix~\ref{b}.
\end{proof}

The following three lemmatas are  used in the completeness proof:

\begin{restatable}{lem}{addextraK}
\label{lem:add_extra_K}
 For $m\geq 1$,
\[
\vdash\quad K_a \psi_1 \land \cdots \land K_a \psi_m\quad \imp\quad K_a(K_a \psi_1 \land \cdots \land K_a \psi_m).
\]
\end{restatable}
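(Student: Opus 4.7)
The plan is to proceed by induction on $m \geq 1$.

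The base case $m=1$ is precisely axiom $\mathbf{4}$, which reads $\vdash K_a \psi_1 \imp K_a K_a \psi_1$. For the inductive step, I would abbreviate $\Phi \ce K_a \psi_1 \land \cdots \land K_a \psi_m$, so the induction hypothesis reads $\vdash \Phi \imp K_a \Phi$. Axiom $\mathbf{4}$ applied to $\psi_{m+1}$ gives $\vdash K_a \psi_{m+1} \imp K_a K_a \psi_{m+1}$. Routine propositional reasoning, via a tautology of the shape $(\alpha \imp \alpha') \imp \bigl((\beta \imp \beta') \imp (\alpha \land \beta \imp \alpha' \land \beta')\bigr)$ together with two applications of $\mathbf{MP}^{\isdef}$, then yields
\[
\vdash \Phi \land K_a \psi_{m+1} \imp K_a \Phi \land K_a K_a \psi_{m+1}.
\]
Next, Lemma~\ref{lem:ConjOfKaImpKaOfConj} produces $\vdash K_a \Phi \land K_a K_a \psi_{m+1} \imp K_a(\Phi \land K_a \psi_{m+1})$, and a final application of hypothetical syllogism (Lemma~\ref{lem:condHS}) chains the two implications into $\vdash \Phi \land K_a \psi_{m+1} \imp K_a(\Phi \land K_a \psi_{m+1})$, which is exactly the $(m+1)$-instance of the statement.

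The only non-cosmetic obstacle, and the reason the argument is not a direct port of the $\mathsf{S5}$ proof, is the bookkeeping of definability side-conditions for each use of $\mathbf{MP}^{\isdef}$ and of Lemma~\ref{lem:condHS}. The key ingredients here are Lemma~\ref{lem:eq_definabilities}\eqref{def:trans}, giving $K_a \chi \eqdef K_a K_a \chi$, and Lemma~\ref{lem:eq_definabilities}\eqref{lem:def_tough_K}, giving $\Phi \eqdef K_a \Phi$. Combined with the fact that the definability of an implication or conjunction requires definability of both sides, these imply that $\Phi \land K_a \psi_{m+1}$, $K_a \Phi \land K_a K_a \psi_{m+1}$, and $K_a(\Phi \land K_a \psi_{m+1})$ are pairwise equidefinable, so every definability consequence demanded along the way reduces to a direct application of Lemma~\ref{lem:eq_definabilities}\eqref{defin_monot},\eqref{lem:eq_def_trans}.
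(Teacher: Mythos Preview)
Your proof is correct and follows essentially the same approach as the paper: apply axiom~\textbf{4}, combine the resulting implications conjunctively, push $K_a$ inside via Lemma~\ref{lem:ConjOfKaImpKaOfConj}, and chain by Lemma~\ref{lem:condHS}, with the definability side-conditions discharged exactly as you indicate via Lemma~\ref{lem:eq_definabilities}\eqref{def:trans},\eqref{lem:def_tough_K}. The only organizational difference is that the paper factors the inductive content into two auxiliary appendix lemmas (one combining $\vdash \varphi_i\imp\psi_i$ into $\vdash \bigwedge\varphi_i\imp\bigwedge\psi_i$, one generalizing Lemma~\ref{lem:ConjOfKaImpKaOfConj} to $m$ conjuncts) and then proves the statement without induction, whereas you induce directly on the statement itself.
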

\begin{proof}
See Appendix~\ref{b}.
\end{proof}

\begin{restatable}{lem}{derforcons}
	\label{lem:derivationForConsistency}
	For $m \geq 1$, the following rule is derivable:
	\[\displaystyle \orule{ \vdash \lnot (\phi \land K_a \psi_1 \land \cdots \land K_a \psi_m)}{ \vdash \lnot (K_a \phi \land K_a \psi_1 \land \cdots \land K_a \psi_m)}.\]
\end{restatable}
\begin{proof}
See Appendix~\ref{b}.
\end{proof}

\begin{restatable}{lem}{derforconsneg}
	\label{lem:derivationForConsistencyNeg}
		For $m \geq 1$, the following rule is derivable:
	\[\displaystyle\orule{ \vdash \lnot (\lnot \phi \land K_a \psi_1 \land \cdots \land K_a \psi_m)}{ \vdash \lnot (\lnot K_a  \phi \land K_a \psi_1 \land \cdots \land K_a \psi_m)}.
	\]
\end{restatable}
\begin{proof}
See Appendix~\ref{b}.
\end{proof}

\section{Completeness via the Canonical Simplicial Model}
\label{completeness}

In this section, we present the novel construction of the canonical \emph{simplicial} model. Our main challenge  was that 
  the standard canonical \emph{Kripke} model construction,  based on maximal consistent sets, builds the canonical model out of Kripke worlds representing global states, whereas a simplicial model is built from vertices representing  local states, \ie~agents' view of the world.

Before giving the technical account of the necessary changes and a proof of completeness, we explain the chosen construction on the structural level. To do that, we briefly recall how  standard canonical Kripke models are constructed  and used to prove  completeness. A~canonical model $\mathcal{M}^C$ for a given logic can be viewed as a universal countermodel, in the sense that any formula that can be refuted in some world~$w$ of some Kripke model $\mathcal{M}$ can also be refuted in a world of the canonical model $\mathcal{M}^C$. 
To this effect, for any world $w$ of any Kripke model $\mathcal{M}$, there is a world in the canonical model that assigns exactly the same truth values as~$w$, which is achieved by using the set  $\Gamma_{\mathcal{M},w} \ce \{\phi \mid \mathcal{M}, w \vDash \phi\}$ of formulae that are true at world~$w$ of Kripke model $\mathcal{M}$ as a world in the canonical model. In fact, the domain of the canonical model consists of all sets~$\Gamma_{\mathcal{M},w}$ over all models $\mathcal{M}$ and worlds~$w$. Note that in the binary boolean semantics the set  $\Gamma_{\mathcal{M},w}$ of formulae that are true in~$w$ provides a complete description of all truth values in~$w$: the set of formulae false in~$w$ is~$\overline{\Gamma_{\mathcal{M},w}}$, the complement of the set of true formulae. 

The completeness proof is achieved by combining these syntactic objects (sets of formulae)~--- via an appropriate valuation and accessibility relations ---  into a Kripke model in such a way that semantic truth can be reduced to syntactic membership. This reduction is termed the \emph{Truth Lemma} and states that a formula $\phi$ is true at a world $\Gamma$ of the canonical model if{f} $\phi \in \Gamma$. Again, the syntactic criterion for falsity at a world of the canonical model easily follows: 
\begin{equation}
\label{eq:neg_twin}
\text{a formula $\phi$ is false at a world $\Gamma$}
\qquad \Longleftrightarrow \qquad 
\phi \notin \Gamma 
\qquad \Longleftrightarrow \qquad 
\lnot \phi \in \Gamma.
\end{equation}

Of course, building a model from all worlds of the class of all Kripke models would have been unwieldy, especially given that all $\Gamma_{\mathcal{M},w}$ are subsets of the set of all formulae, making the set of worlds of the canonical model a subset of the power set of the set of all formulae. Instead, the canonical model is built of \emph{maximal consistent sets} that are syntactically consistent with respect to the given axiomatization, whereas the condition of maximality ensures that a set is maximally consistent if{f} it is equal to $\Gamma_{\mathcal{M},w}$ for some Kripke model $\mathcal{M}$ and world $w$.

Our construction of the canonical \emph{simplicial} model below adapts these ideas to the semantics of impure simplicial complexes. At first, it might seem that, unlike in the binary case,  knowing the set of formulae that are true would not be sufficient to represent the distribution of our three truth values. How would we be able to distinguish formulae that are false from those that are undefined? Fortunately, it is possible to do by separating the two descriptions of falsity in~\eqref{eq:neg_twin}, which were equivalent in the binary semantics. In our three-valued case, we will formulate the truth lemma to associate true formulae with formulae in $\Gamma$, false formulae with formulae \emph{whose negation} is in $\Gamma$, with all remaining formulae, \ie~formulae $\phi$ such that both $\phi \notin \Gamma$ and $\lnot \phi \notin \Gamma$,  undefined. This enables us to continue building the canonical model out of sets of formulae rather than pairs of sets or some such.\looseness=-1

The next hurdle is that such sets of formulae represent a possible distribution of three truth values in a global state, whereas we need to find a syntactic representation of local states for each agent $a$. There are two immediate candidates for formulae belonging to~$a$: propositional variables $p_a \in P_a$ and knowledge assertions $K_a \phi$. We will show that the latter are sufficient to describe the local state of $a$ and, hence, each $a$-colored vertex in our canonical simplicial model will be a set $K_a\G$ of formulae of the form $K_a \phi$. Accordingly, each face in the canonical simplicial model will be a finite set $X=\{K_{a_1}\G, K_{a_2}\G, \dots, K_{a_m}\G\}$ of sets~$K_{a_i}\G$ of knowledge assertions for distinct agents $a_1,a_2, \dots, a_m$ alive in a global state~$\G$, and the whole canonical simplicial model will be a set of such finite sets $X$, supplied with appropriate valuation and coloring functions.

In the logic of impure simplicial models, it is possible to restore  the global state $\Gamma$ from the collection 
\[
X_\G=\{K_{a_1}\G, K_{a_2}\G, \dots, K_{a_m}\G\}
\]
of local states for all agents $a_1,a_2,\dots,a_m$ alive in~$\G$.

There is one final piece of the puzzle before we begin the formal part of the proof. As discussed above, in the boolean case, \emph{maximal} consistent sets represent all possible truth sets across all worlds of all Kripke models. We need to do the same for all facets of all impure simplicial models, and it is clear that the absolute maximality condition is too strong for that. In Kripke models, adding any formula to a truth set of a given world would make the set inconsistent. If, however, agent $a$ is dead in a given facet,  then adding a propositional variable~$p_a \in P_a$ to the set of formulae that are true in this facet need not lead to inconsistency. There may well be another facet where $a$ is alive and some previously undefined formulae, including~$p_a$, are true there. (It is also possible that adding $p_a$ does lead to inconsistency if, \eg,~$K_b \lnot p_a$ is present in the given truth set.) 
In our canonical model construction, we replace maximal consistent sets with maximal consistent subsets of definability-complete sets.
\begin{defi}[Definability completeness]
A set\/ $\Xi\ne \varnothing$ is \emph{definability complete} if{f} it contains all its definability consequences, \ie~if $\phi \in \Xi$ whenever\/ $\Xi \impdef \phi$.
\end{defi}

To streamline proofs and avoid the reliance on $\Xi$ as an extra parameter, we use a more direct description of maximal consistent subsets of such $\Xi$'s as \emph{sets that determine all their definability consequences}.
\begin{defi}[Being determined]
	\label{def:notation}
 	A formula $\phi$ is \emph{determined} by a set\/~$\Gamma$ of formulae, written
$
\phi \Sin \G$,
{if{f}} either $\phi \in \G$ or\/  $\lnot \phi \in \G$.
Accordingly, a formula is \emph{not determined}, written~$\phi \notSin \G$, if{f}\/ $\Gamma\cap \{\phi, \lnot \phi\} = \varnothing$.
\end{defi}

\begin{defi}[Consistent sets]
	\label{def:consistent}
	A  set\/~$\G$ of formulae is \emph{inconsistent} if{f}\/ $\vdash \lnot\bigwedge \G_0$ for some finite\/ $\varnothing \ne \G_0 \subseteq \G$. Otherwise,\/ $\G$~is \emph{consistent}. In particular,\/ $\varnothing$~is consistent.
\end{defi} 
\begin{rem}
The non-standard requirement  $\G_0 \ne \varnothing$ is due to the lack of  boolean constants.
\end{rem}

\begin{defi}[Definability-maximal consistent (DMC) sets]
	\label{def:locallyMax}
	 A consistent set\/ $\G\ne \varnothing$ is \emph{definability maximal  (a DMC set)} if{f} it determines all its definability consequences:
	 	 \begin{equation}
	 \label{eq:def_DMC}
	 \G \impdef \phi \qquad\Longrightarrow\qquad \phi \Sin \G.
	 \end{equation}
\end{defi}

The following facts are standard  for classical propositional reasoning but need to be reaffirmed here due to our weaker version of modus ponens.
\begin{restatable}[Monotonicity of~$\vdash$]{lem}{supsetincons}
\label{lem:add_form}
For formulae $\phi$ and $\psi$ and a finite set\/ $\G \ne \varnothing$ of formulae:
\begin{enumerate}[(a)]
\item\label{lem:add_form_neg}
If\/  $\vdash \lnot\bigwedge \G$, then\/ $\vdash \lnot\left(\phi \land \bigwedge \G\right)$.
\item\label{lem:add_form_deriv}
If\/ $\vdash \bigwedge \G \to \psi$, then\/ $\vdash \phi \land \bigwedge \G \to \psi$.
\end{enumerate} 
\end{restatable}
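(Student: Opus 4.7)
My plan is to derive both parts as straightforward applications of the restricted modus ponens rule $\mathbf{MP}^{\isdef}$, using the corresponding propositional tautologies as the implicational premises. The only non-trivial aspect is verifying the definability side condition in each case, which amounts to showing that whatever formula on the conclusion side is well-defined at a face forces all relevant subformulas to be defined there as well.

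For part~\eqref{lem:add_form_neg}, I would start from the tautology $\lnot\bigwedge \G \imp \lnot(\varphi \land \bigwedge \G)$ together with the hypothesis $\vdash \lnot \bigwedge \G$, and conclude via $\mathbf{MP}^{\isdef}$. The required definability side condition is $\lnot(\varphi \land \bigwedge \G) \impdef \lnot \bigwedge \G$, which follows directly from Def.~\ref{def:defin}: if $\lnot(\varphi \land \bigwedge \G)$ is defined in some face $X$ of a simplicial model, then so is $\varphi \land \bigwedge \G$, and hence $\bigwedge \G$, and hence $\lnot \bigwedge \G$.

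For part~\eqref{lem:add_form_deriv}, I would start from the propositional tautology $\bigl(\bigwedge \G \imp \psi\bigr) \imp \bigl(\phi \land \bigwedge \G \imp \psi\bigr)$ together with the hypothesis $\vdash \bigwedge \G \imp \psi$, and again apply $\mathbf{MP}^{\isdef}$. The side condition this time is $\phi \land \bigwedge \G \imp \psi \impdef \bigwedge \G \imp \psi$. Again by Def.~\ref{def:defin}: if the left-hand side is defined at $X$, then $\phi$, $\bigwedge\G$, and $\psi$ are all defined at $X$, so in particular $\bigwedge \G \imp \psi$ is defined at $X$.

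I do not expect any serious obstacle. The only subtlety worth double-checking is that the finiteness and non-emptiness provisos for $\G$ (built into Def.~\ref{def:consistent} and Def.~\ref{def:deriv}) match the shape of the hypotheses and conclusions, so the derivations are legitimate formulas rather than degenerate empty conjunctions; since $\G$ is assumed finite and non-empty in the statement, this is immediate.
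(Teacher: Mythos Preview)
Your proposal is correct and essentially matches the paper's proof. Part~\eqref{lem:add_form_neg} is identical; for part~\eqref{lem:add_form_deriv} the paper instead uses the tautology $\phi \land \bigwedge \G \imp \bigwedge \G$ together with the hypothesis and applies the hypothetical-syllogism lemma, but this is the same one-step propositional argument with the same definability check, just packaged slightly differently.
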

\begin{proof}
See Appendix~\ref{b}.
\end{proof}

Let us justify the term \emph{definability maximal} by showing that DMC sets are exactly maximal consistent subsets of definability-complete sets:

\begin{lem}
\label{lem:DMC_sets_determ}
For any  consistent set\/ $\G\ne \varnothing$,  the following statements are equivalent:
\begin{enumerate}[(i)]
\item\label{DMC_one} $\G$ is a DMC set; 
\item\label{DMC_two}  $\G$~is maximal among consistent subsets of some definability-complete set\/~$\Xi$.
\end{enumerate} 
\end{lem}
\begin{proof}

\ref{DMC_one} $\Longrightarrow$ \ref{DMC_two}. 
Let $\G$ be  a DMC set, \ie~let  \eqref{eq:def_DMC}~hold. We start by showing that  set~$\Xi \ce \{\phi \mid \G \impdef \phi\}$ is definability complete. Indeed, if $\Xi \impdef \psi$, then  $\G \impdef \psi$ by Lemma~\ref{lem:eq_definabilities}\ref{lem:eq_def_trans}, and $\psi \in \Xi$ by construction. Since $\G \impdef \psi$ for any $\psi \in \G$, clearly  $\G \subseteq \Xi$, in particular, $\Xi \ne \varnothing$. It remains to show that $\G$ is maximal among consistent subsets of $\Xi$. Suppose, towards a contradiction, that $\G \cup \{\psi\}$ were consistent for some $\psi \in \Xi \setminus \G$. Then $\psi \notin \G$ but $\G \impdef \psi$, which, due to~\eqref{eq:def_DMC}, would imply that $\lnot \psi \in \G$. But then $\vdash \lnot (\psi \land \lnot \psi)$ for $\{\psi, \lnot \psi \} \subseteq \G \cup \{\psi\}$, contradicting the supposition of consistency for this set. The contradiction shows that $\G$ is maximal among consistent subsets of the definability-complete set $\Xi$.

\ref{DMC_two} $\Longrightarrow$ \ref{DMC_one}. Let $\Xi$ be a definability-complete set, $\G$ be maximal among  consistent subsets of $\Xi$, and let $\G \impdef \phi$. Then  there is some finite subset $\G_0\subseteq \G$  such that $\phi$ is defined whenever all formulae from $\G_0$ are. Since $\G_0\subseteq \G \subseteq \Xi$,  we have $\Xi \impdef \phi$. Since $\phi \eqdef \lnot \phi$, also  $\Xi \impdef \lnot \phi$. Due to the definability completeness of $\Xi$, it means that $\{\phi, \lnot \phi\} \subseteq \Xi$. 
Suppose towards a contradiction that $\phi \notin \G$ and $\lnot \phi \notin \G$.
Due to the maximality of $\G$ among consistent subsets of $\Xi$, it would mean that both $\G \cup \{\phi\}$ and $\G \cup \{\lnot \phi\}$ would be inconsistent. In view of Lemma~\ref{lem:add_form}\ref{lem:add_form_neg}, without loss of generality, 
$
\vdash \lnot \bigl(\phi \land\bigwedge \G_0 \land \bigwedge \G_1\bigr)$
and
$\vdash \lnot \bigl(\lnot \phi \land \bigwedge \G_0 \land \bigwedge \G_2\bigr)$
 for some non-empty finite  subsets $\G_1, \G_2 \subseteq \G$.
 The following is a propositional tautology:
\[
\vdash\quad  \lnot\bigl(\phi \land \bigwedge \G_0 \land\bigwedge \G_1\bigr)\quad \to\quad \left(\lnot\bigl(\lnot \phi \land\bigwedge \G_0 \land \bigwedge \G_2\bigr) \to \lnot \bigl(\bigwedge \G_0 \land \bigwedge\G_1 \land  \bigwedge \G_2\bigr)\right).
\]
Since it is clear that
\[
\lnot\bigl(\lnot \phi \land\bigwedge \G_0 \land \bigwedge \G_2\bigr) \to \lnot \bigl(\bigwedge \G_0 \land \bigwedge\G_1 \land  \bigwedge \G_2\bigr) 
\qquad\impdef\qquad \lnot\bigl(\phi \land \bigwedge \G_0 \land\bigwedge \G_1\bigr)
,\]
it would follow by $\mathbf{MP}^{\isdef}$ that
\[
\vdash\quad \lnot\bigl(\lnot \phi \land\bigwedge \G_0 \land \bigwedge \G_2\bigr)\quad \to\quad \lnot \bigl(\bigwedge \G_0 \land \bigwedge\G_1 \land  \bigwedge \G_2\bigr).
\]
Note, additionally, that  
\[
\lnot \bigl(\bigwedge \G_0 \land \bigwedge\G_1 \land  \bigwedge \G_2\bigr)  
\qquad\impdef\qquad
\lnot\bigl(\lnot \phi \land\bigwedge \G_0 \land \bigwedge \G_2\bigr)
.\]
Indeed, whenever $\lnot \bigl(\bigwedge \G_0 \land \bigwedge\G_1 \land  \bigwedge \G_2\bigr)$ is defined,  all formulae from $\G_2$ and $\G_0$ are defined, the latter implying that $\phi$ is defined. Hence, $\lnot\bigl(\lnot \phi \land\bigwedge \G_0 \land \bigwedge \G_2\bigr)$ is defined. Thus, another application of $\mathbf{MP}^{\isdef}$  would yield
$
\vdash \lnot \bigl(\bigwedge \G_0 \land \bigwedge\G_1 \land  \bigwedge \G_2\bigr).
$ 
But $\G_0 \cup \G_1 \cup \G_2$ is a finite subset of~$\G$, creating a contradiction with the assumption that $\G$ is consistent. This contradiction shows that $\G$ determines $\phi$. Since $\phi$ was chosen arbitrarily, $\G$~determines all its definability consequences, \ie~is a DMC set.
\end{proof}

We start our formal presentation by showing that DMC sets represent all possible truth sets across all faces of all simplicial models. This is slightly more than representing all facets, \ie~all global states, but creates no additional problems.

\begin{lem}
	\label{lem:canmod_from_Faces}
	For any simplicial model $\C=(C,\chi,\ell)$ and its face $X \in C$, the truth set
	\[
	\G_{\C,X} \ce \{ \phi  \mid \C,X \vDash \phi \}
	\]  
	is a DMC set.
\end{lem}
\begin{proof}
It is clear that $\G_{\C,X}$ cannot be empty. Hence, to show that $\G_{\C,X}$ is consistent, consider any non-empty finite subset $\G_0 \subseteq \G_{\C,X}$. We have $\C,X \vDash \bigwedge \G_0$ by construction. Therefore, $\C,X \nvDash \lnot \bigwedge \G_0$ and $\nvdash \lnot\bigwedge \G_0$ by the soundness of $\Sfive^{\bowtie}$ (Theorem~\ref{th:soundimp}). Since $\G_0$ was chosen arbitrarily, we have established that $\G_{\C,X}$ is consistent. 

	It remains to show that $\G_{\C,X}$ is definability maximal (DM). 
	Assume that $\G_{\C,X} \impdef \phi$, \ie~there is a finite subset $\G_0 \subseteq \G_{\C,X}$ such that $\phi$ is defined whenever all formulae from~$\G_0$~are. Since, by construction, $\C,X \vDash \psi$ for all $\psi \in \G_{\C,X}$, it follows from Lemma~\ref{lem:threevalues}\ref{clause:threevalues_one} that $\C,X \isdef \psi$ for all $\psi \in \G_0$. Thus,  $\C,X \isdef \phi$, meaning that either $\C,X \vDash \phi$ or $\C,X \vDash \lnot \phi$. By construction, either $\phi$ or $\lnot \phi$ belongs to $\G_{\C,X}$, \ie~$\phi \Sin \G_{\C,X}$.
\end{proof}

Properties of DMC sets differ slightly from those  of maximal consistent sets:
\begin{lem}[Properties of DMC sets]
\label{lem:LMCclosed}
For DMC sets\/~$\G$ and\/~$\D$, an agent~$a$, propositional variable $p_a\in P_a$, and formulae $\phi$, $\phi_1$, $\phi_2$, and $\psi$:
\begin{enumerate}[(a)]
\item\label{follows}
$\G \impdef \phi$\quad and\quad  $\G \vdash \phi$ \qquad $\Longrightarrow$ \qquad $\phi \in \G$.
\item\label{notboth}
$\G \impdef \phi$\qquad $\Longrightarrow$ \qquad exactly one of\quad  $\phi \in \G$\quad and\quad $\lnot \phi\in \G$\quad holds.
\item\label{closeconj}
$\phi_1 \land \phi_2 \in \G$\qquad $\Longleftrightarrow$ \qquad both\quad $\phi_1\in\G$\quad and\quad $\phi_2\in\G$.
\item
\label{lem:LCMclosedstarWrtKa}
	$K_a\G  \neq \varnothing$\quad and\quad  $\phi \Sin \G$ \qquad $\Longrightarrow$ \qquad  $K_a\phi \Sin \G$. 
\item
\label{atomsthere}
$p_a \in \G$ iff $K_a p_a \in \G$, and\/ $\lnot p_a \in \G$ iff $K_a \lnot p_a \in \G$. Hence,  $p_a \Sin \G$ implies $K_a \G \ne \varnothing$.
\item	
\label{lem:addingPhiDoesNotChangeKa}
	\label{lem:NeverAddKa}
$ K_a \G, \phi \impdef K_a\psi$\quad and\quad $K_a\phi \Sin \G$ \qquad$\Longrightarrow$\qquad $K_a\psi \Sin \G$.

\item
\label{lem:neg_neg_also_carries}
$K_a \G \subseteq K_a \D$ \quad and\quad 
$\lnot K_a \phi \in \G $
\qquad$\Longrightarrow$\qquad $\lnot K_a \phi \in \D$.

\item
\label{clause:ifKthennonot}
$K_a \phi \in \G$\qquad$\Longrightarrow$\qquad $\lnot \phi \notin\G$\quad and\quad $K_a\G \cup\{\phi\}$ is consistent.
\item
\label{clause:ifnotKthennophi}
$\lnot K_a \phi \in \G$
\qquad$\Longrightarrow$\qquad
$K_a \G \cup\{\lnot \phi\}$ is consistent.
\end{enumerate} 
\end{lem}
\begin{proof}
\hfill
\begin{enumerate}[(a)]
\item 
Since $\G \vdash \phi$, we have $\vdash \bigwedge\G_0 \to \phi$ for some finite $\varnothing \ne \G_0 \subseteq \G$. 
Given tautology \mbox{$\vdash \bigl(\bigwedge\G_0 \to \phi\bigr) \imp \lnot \bigl(\lnot \phi \land \bigwedge \G_0\bigr)$},  it follows by \textbf{MP}$^{\isdef}$ that $\vdash \lnot \bigl(\lnot \phi \land \bigwedge \G_0\bigr)$  because $ \lnot \bigl(\lnot \phi \land \bigwedge \G_0\bigr) \impdef \bigwedge\G_0 \to \phi$. Thus, $\lnot \phi \in \G$ would contradict $\G$'s consistency. On the other hand, either $\phi \in \G$ or $\lnot \phi \in \G$ because $\G$ is DM and $\G \impdef \phi$.  Thus, $\phi \in \G$.
\item
It follows from $\vdash \lnot (\phi \land \lnot \phi)$ and $\G$ being  DM. 
\item
It follows by clause~\ref{follows} since  for $i=1,2$
$
\phi_1 \land \phi_2  \vdash \phi_i$,
$\phi_1 \land \phi_2  \impdef \phi_i$, 
$\phi_1, \phi_2 \vdash \phi_1 \land \phi_2$, 
and
$\phi_1, \phi_2 \impdef \phi_1 \land \phi_2$.
\item
By Lemma~\ref{lem:eq_definabilities}\ref{lem:if_phi_then_ka_phi}, $\G \impdef K_a \phi$. 
Hence, $K_a\phi \Sin \G$ 
because $\G$ is DM.

\item
By Lemma~\ref{lem:eq_definabilities}\ref{lem:eq_atom_K}, since $p_a \eqdef \lnot p_a$, it follows that $p_a \eqdef K_a p_a$ and $\lnot p_a \eqdef K_a \lnot p_a$. By Lemma~\ref{lem:locality_positive}, $p_a \vdash K_a p_a$ and $\lnot p_a \vdash K_a \lnot p_a$. By axiom \textbf{T}, $K_a p_a \vdash p_a$ and $K_a \lnot p_a \vdash \lnot p_a$. Hence, the first two if{f} claims follow from clause~\ref{follows}. The last claim follows from them.

\item
 By Lemma~\ref{lem:eq_definabilities}\ref{def_raise_ka}, $K_a \G,K_a\phi\impdef K_a\psi$, hence, also $K_a \G,\lnot K_a\phi\impdef K_a\psi$. 
Thus, however  $\G$~determines $K_a \phi$, we know that  $\G \impdef K_a\psi$. Now $K_a \psi\Sin \G$ follows from $\G$ being DM.

\item By  clause~\ref{follows},  since  $\lnot K_a \phi \impdef K_a\lnot K_a \phi$  and additionally $\lnot K_a \phi \vdash K_a\lnot K_a \phi$ due to axiom~\textbf{5}, it follows that, if $\lnot K_a \phi \in \G$, then  $K_a \lnot K_a \phi \in K_a \G \subseteq K_a \D$. Similarly, $K_a\lnot K_a \phi \impdef \lnot K_a \phi$ by Lemma~\ref{lem:eq_definabilities}\ref{def:eucl} and $K_a \lnot K_a \phi \vdash \lnot K_a \phi$ by axiom~\textbf{T}. Hence, $\lnot K_a \phi \in \D$ by clause~\ref{follows}.

\item
Since $\lnot (K_a \phi \land \lnot \phi)$ is equidefinable and equivalent to $K_a \phi \to \phi$, \ie~to axiom~\textbf{T}, having $\lnot \phi \in\G$ would have contradicted the consistency of $\G$. (Note that, unlike in traditional modal logic, this does not necessarily mean $\phi \in \G$.)
	We prove the consistency of~$K_a \G \cup \{\phi\}$  by contradiction. Were it not,   we would have $K_a \psi_1,\dots, K_a \psi_m \in K_a \G\subseteq \G$  such that (without loss of generality in view of Lemma~\ref{lem:add_form}\ref{lem:add_form_neg}) $\vdash \lnot (\phi \land  K_a \psi_1 \land \dots \land K_a \psi_m)$. 
	A contradiction $\vdash \lnot (K_a \phi \land K_a \psi_1 \land \dots \land K_a \psi_m)$ to the consistency of $\G$ would follow by Lemma~\ref{lem:derivationForConsistency}.
\item
The argument here is similar to the preceding clause, except that Lemma~\ref{lem:derivationForConsistencyNeg} is used in place of Lemma~\ref{lem:derivationForConsistency} and instead of assuming $K_a \G \ne \varnothing$, the non-emptiness is inferred from $K_a \lnot K_a \phi \in  \G$, which is obtained by the same argument as in clause~\ref{lem:neg_neg_also_carries}.
\qedhere
\end{enumerate} 
\end{proof}

\begin{lem}[Lindenbaum Lemma]
\label{lem:lind}
Any consistent set\/ $\G\ne \varnothing$ can be extended to a DMC  set\/ $\D \supseteq \G$ such that\/ $\G \impdef \psi$ for all $\psi \in \D$.
\end{lem}
\begin{proof}
Let $\psi_0,\psi_1,\dots$ be an  enumeration of the set $\G^{\impdef} \colonequals \{\psi  \mid \G \impdef \psi\}$  of all definability consequences of $\G$. Starting from $\D_0\ce\G$, add formulae $\psi_i$ one by one, unless doing so would cause inconsistency: $\D_{i+1} \ce \D_i \cup \{\psi_i\}$ if that set is consistent or $\D_{i+1} \ce \D_i$ otherwise.  Define $\D \ce \cup_{i=0}^\infty \D_i \subseteq \G^{\impdef}$, which is consistent for the same reason as in the standard Lindenbaum construction that enumerates all formulae: were $\D$ inconsistent, one of~$\D_i$ would have been.
We now show that  $\D$ is maximal among consistent subsets of $\G^{\impdef}$. Indeed, any~$\psi \in \G^{\impdef} \setminus \D$ is one of $\psi_i$'s and was not added to $\D_{i}$ due to $\D_{i} \cup\{\psi\}$ being inconsistent. Hence, $\D \cup \{\psi\}$ is also inconsistent. Since $\G^{\impdef}$ is definability complete by  Lemma~\ref{lem:eq_definabilities}\ref{lem:eq_def_trans}, it follows from Lemma~\ref{lem:DMC_sets_determ} that $\D$ is a DMC set.
\end{proof}

\begin{defi}[Simplicial canonical model]
	\label{def:simpIm}
	\label{def:simpcanonic}
	Let\/ $\LMC$  denote the collection of all DMC sets. For\/  $\G \in \LMC$, we define
	\[
		   X_{\G} \colonequals \{K_a \G \mid a \in A, K_a\G \ne \varnothing \} .
	\]
	The \emph{simplicial canonical model} $\canmod = (C^c, \chi^c, \ell^c)$ is defined as follows:
	\begin{itemize}
		\item $C^c \ce \{X   \mid (\exists \G \in \LMC)\, (\varnothing \ne X \subseteq X_{\G} )\}$.
		 Accordingly, its set of vertices can be computed as $\VV^c= \bigcup_{X \in C^c} X =  \{K_a \G \mid \G \in \LMC, a \in A, K_a\G \ne \varnothing\}$.

		\item For $K_a \G \in \VV^c$, we define $\chi^c(K_a \G) \ce a $. 
		\item For $K_a \G \in \VV^c$, we define $\ell^c(K_a \G) \ce \{p_a \in P_a \mid K_a p_a \in K_a \G \}$. 
	\end{itemize}
\end{defi}

In other words, given $\G \in \LMC$, non-empty sets $K_a \G$ are  vertices of the canonical model~$\canmod$ and sets $X_\G$  are, in most cases,  its facets (but not necessarily, see Section~\ref{reflections}). The color of vertex~$K_a \G$ must obviously be $a$, and the local variables that are true there are  those known by this agent $a$. 
\begin{rem}
Unlike sets $\G \in \LMC$, sets $K_a \G$ are not DMC sets. For instance, neither $p_a \in K_a \G$ nor $\lnot p_a \in K_a \G$ even though either $K_a p_a \in K_a \G$ or  $K_a \lnot p_a \in K_a \G$ and $p_a$ is a definability consequence of either.
\end{rem}

\begin{restatable}[Correctness]{lem}{cancor}
\label{lem:cancor}
The object $\canmod$ constructed in Definition~\ref{def:simpcanonic}  is a simplicial model.
\end{restatable}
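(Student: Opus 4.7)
The plan is to verify, in order, each clause of the simplicial-model definition for $\canmod = (C^c, \chi^c, \ell^c)$: that $C^c$ is a non-empty, downward-closed family of non-empty finite subsets of $\mathcal{V}^c$, that every vertex forms a simplex on its own, that $\chi^c$ is chromatic, and that $\ell^c$ respects the coloring. Most clauses reduce to an immediate unpacking of Def.~\ref{def:simpcanonic}; the only step with genuine content is the non-emptiness of $C^c$, which is where I expect the main (though modest) obstacle to lie, since it requires producing at least one DMC set whose $K_a$-projection is non-empty.

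For non-emptiness, I would show that the singleton $\{p_a\}$ is consistent for some local variable $p_a \in P_a$. Were $\vdash \lnot p_a$, then by soundness (Theorem~\ref{th:soundimp}) $p_a$ would be false in every face of every simplicial model where it is defined; but a one-vertex model coloured $a$ and containing $p_a$ in its valuation has $p_a$ defined and true at that vertex, a contradiction. The Lindenbaum Lemma (Lem.~\ref{lem:lind}) then extends $\{p_a\}$ to a DMC set $\G \in \LMC$, and Lem.~\ref{lem:LMCclosed}(\ref{atomsthere}) yields $K_a p_a \in \G$, hence $K_a \G \ne \varnothing$, so $X_\G$ is a non-empty element of $C^c$.

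The remaining clauses follow directly from the definitions. Every $X \in C^c$ is non-empty by construction and finite because $X \subseteq X_\G$ with $|X_\G| \le |A| = n+1$, and each of its members is an element of $\mathcal{V}^c$ by the defining formula for vertices. Downward closure is automatic: $\varnothing \ne Y \subseteq X \subseteq X_\G$ yields $Y \in C^c$. For the singleton clause, every $v \in \mathcal{V}^c$ has the form $K_a \G$ with $K_a \G \ne \varnothing$, so $K_a \G \in X_\G$ and $\{K_a \G\} \in C^c$.

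For the chromatic condition, suppose $u = K_a \G_1$ and $v = K_a \G_2$ lie in a common simplex $X \in C^c$ with $\chi^c(u) = \chi^c(v) = a$. Pick $\G^* \in \LMC$ with $X \subseteq X_{\G^*}$; then $u, v \in X_{\G^*}$. By the definition $X_{\G^*} = \{K_b \G^* \mid b \in A, K_b \G^* \ne \varnothing\}$, colour $a$ indexes at most one element of $X_{\G^*}$, namely $K_a \G^*$, so $u = K_a \G^* = v$. Finally, $\ell^c(K_a \G) \subseteq P_a$ is immediate from the formula defining $\ell^c$. Combining these observations establishes that $\canmod$ is a simplicial model.
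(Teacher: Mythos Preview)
Your argument is correct and follows essentially the same outline as the paper's, but two points deserve comment. First, you omit checking that $\chi^c$ (and $\ell^c$) are well-defined as functions on $\mathcal V^c$: a vertex is a \emph{set} of formulas, so one must verify that the same non-empty set cannot arise as both $K_a\G$ and $K_b\D$ with $a\ne b$; the paper handles this by observing that every element of $K_a\G$ syntactically begins with $K_a$. Your chromatic-condition argument tacitly relies on this. Second, for non-emptiness of $C^c$ you exhibit one particular DMC set $\G$ with $K_a\G\ne\varnothing$, whereas the paper proves the stronger fact that $X_\G\ne\varnothing$ for \emph{every} $\G\in\LMC$ (by showing that any DMC set must determine some $K_a\psi$). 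Your weaker statement suffices for Lemma~\ref{lem:cancor} itself, but the stronger one is invoked in the Truth Lemma (the footnote ``Note that $X_\G\in C^c$''), so the paper's route buys a fact you would otherwise have to supply separately.
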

\begin{proof}
The proof is straightforward and can be found in Appendix~\ref{b}.
\end{proof}

The Truth Lemma has to be adapted to the logic of impure simplicial models in two ways:\looseness=-1
\begin{enumerate}[(i)]
\item formulae may be undefined, in addition to being true or false, necessitating an extra clause~\eqref{eq:TL_undef} in the Truth Lemma;
\item the Truth Lemma is only proved for faces of the form $X_\G$ because their proper subfaces~$X \subsetneq X_\G$ do not generally correspond to DMC sets.
\end{enumerate} 

\begin{lem}[Truth Lemma]
	\label{lem:truth}
	Let $\canmod$ be the canonical simplicial model constructed in Definition~\ref{def:simpcanonic}. Then, 
	for any DMC set\/  $\G \in \LMC$ and any formula $\phi$,
	\begin{align}
	\label{eq:TL_undef}
		\phi \notSin  \G \qquad &\Longrightarrow \qquad\canmod, X_{\G} \ndef \phi,
		\\
	\label{eq:TL_true}
		\phi \in \G\qquad &\Longrightarrow \qquad\canmod, X_{\G} \vDash \phi,
		\\
	\label{eq:TL_false}
		\lnot \phi \in \G\qquad &\Longrightarrow \qquad\canmod, X_{\G} \vDash \lnot\phi.
	\end{align}
\end{lem}
\begin{proof}
	The proof is by induction on the construction of $\phi$. Note that $X_\G \in C^c$.\footnote{See the proof of Lemma~\ref{lem:cancor} in Appendix~\ref{b}.}
	\begin{description}
		\item[For $\phi = p_a$]\hfill
		\begin{itemize}
		\item[\eqref{eq:TL_undef}]
			Assume $p_a \notSin \G$. Then $\G \not\impdef p_a$ because $\G$ is DM , hence,  $K_a\G  = \varnothing$ by Lemma~\ref{lem:eq_definabilities}\ref{from_K_to_atom}. Therefore, $K_a\G \notin X_{\G}$, $a  \notin \chi^c(X_{\G})$, and $\canmod, X_{\G} \ndef p_a$.
\item[\eqref{eq:TL_true}]
			Assume $p_a \in \G$. Then,  $K_a p_a \in K_a \G$ by Lemma~\ref{lem:LMCclosed}\ref{atomsthere}. Since $K_a\G\ne \varnothing$, we have $K_a\G \in \VV^c$, and  $p_a \in \ell^c(K_a\G) \subseteq\ell^c(X_{\G})$. Therefore,  $\canmod, X_{\G} \vDash p_a$.
\item[\eqref{eq:TL_false}]
			Assume $\lnot p_a \in \G$.  Then,  $K_a \lnot p_a \in K_a \G$ by Lemma~\ref{lem:LMCclosed}\ref{atomsthere}. Again, $K_a\G\ne \varnothing$ implies $K_a\G \in \VV^c$ and $a \in \chi^c(X_\G)$. Therefore, $\canmod, X_\G \isdef p_a$. On the other hand,  $p_a \notin \G$ by the consistency of $\G$ implying $K_a p_a \notin K_a \G$ by Lemma~\ref{lem:LMCclosed}\ref{atomsthere}. Hence, $p_a \notin \ell^c(X_{\G})$ meaning $\canmod, X_\G \nvDash p_a$. Thus, in summary, $\canmod, X_\G \models \lnot p_a$.
	\end{itemize}
		\item[For $\phi = \lnot \psi$]
		\hfill
		\begin{itemize}
		\item[\eqref{eq:TL_undef}]
Assume $\lnot \psi \notSin  \G$. Then $\psi \notSin \G$  since $\G$ is DM and $\psi \impdef \lnot \psi$. Then $\canmod, X_{\G} \ndef \psi$  by  induction hypothesis~\eqref{eq:TL_undef} for $\psi$. Accordingly, $\canmod, X_{\G} \ndef \lnot \psi$.
\item[\eqref{eq:TL_true}]			
			Assume $\lnot \psi \in \G$. Then, $\canmod, X_\G \models \lnot \psi$ by  induction hypothesis \eqref{eq:TL_false} for $\psi$.
\item[\eqref{eq:TL_false}]			 
			 Assume $\lnot\lnot \psi \in \G$. Then $\psi \in \G$ by Lemma~\ref{lem:LMCclosed}\ref{follows} because  $\lnot\lnot \psi \impdef \psi$ and $\lnot \lnot \psi \vdash \psi$. Thus, $\canmod, X_{\G} \vDash \psi$  by  induction hypothesis~\eqref{eq:TL_true} for $\psi$. Accordingly, $\canmod, X_{\G} \vDash \lnot \lnot \psi$.
	\end{itemize}
		\item[For $\phi = \psi_1 \land \psi_2$]\hfill
		\begin{itemize}
		\item[\eqref{eq:TL_undef}]
Assume $\psi_1 \land \psi_2 \notSin  \G$. Then $\psi_i \notSin \G$ for at least one of the conjuncts  since $\G$~is~DM and $\widetilde{\psi_1}, \widetilde{\psi_2} \impdef \psi_1 \land \psi_2 $ for any $\widetilde{\psi_j} \in \{\psi_j, \lnot \psi_j\}$. Then $\canmod, X_{\G} \ndef \psi_i$  by induction hypothesis~\eqref{eq:TL_undef} for this $\psi_i$. Accordingly, $\canmod, X_{\G} \ndef \psi_1 \land \psi_2$.
\item[\eqref{eq:TL_true}]			
			Assume $\psi_1 \land \psi_2 \in \G$. Then, $\psi_i \in \G$ for both conjuncts by Lemma~\ref{lem:LMCclosed}\ref{closeconj}. Thus, 
			$\canmod, X_\G \models  \psi_i$  by  induction hypothesis \eqref{eq:TL_true} for both $\psi_i$. Accordingly, $\canmod, X_\G \vDash\psi_1 \land \psi_2$. 
\item[\eqref{eq:TL_false}]			 
			 Assume $\lnot (\psi_1 \land \psi_2)\in \G$. Then $\psi_j \Sin \G$  since $\G$ is DM and  $\lnot (\psi_1 \land \psi_2)\impdef \psi_j$ for both conjuncts.  Hence, $\canmod, X_\G \isdef\psi_j$  by Lemma~\ref{lem:threevalues}\ref{clause:threevalues_one}--\ref{clause:threevalues_two} and  induction hypotheses~\mbox{\eqref{eq:TL_true}--\eqref{eq:TL_false}} for both $\psi_j$. Thus, $\canmod, X_{\G} \isdef \psi_1 \land \psi_2$. On the other hand, $\psi_1\land \psi_2\notin \G$ by the consistency of $\G$, hence, $\psi_i \notin \G$ for at least one of $\psi_i$			 
			  by Lemma~\ref{lem:LMCclosed}\ref{closeconj}. It follows that $\lnot \psi_i \in \G$ and $\canmod, X_\G \nvDash  \psi_i$ by  induction hypothesis~\eqref{eq:TL_false} for this $\psi_i$. Hence, $\canmod, X_\G \nvDash  \psi_1 \land \psi_2$ and, overall, $\canmod, X_\G \vDash  \lnot (\psi_1 \land \psi_2)$.
	\end{itemize}
	\item[For $\phi=K_a \psi$]\hfill
	\begin{itemize}
	\item[\eqref{eq:TL_undef}] Assume that $K_a \psi \notSin \G$.	
		\begin{itemize}
			\item If $K_a\G  = \varnothing$, then  $K_a\G \notin X_{\G}$. Therefore, $a  \notin \chi^c(X_{\G})$, and there can be no~$Y \in C^c$ such that $a \in \chi^c(X_{\G} \cap Y)$ and $\canmod, Y \isdef \psi$. In this case, we have $\canmod, X_{\G} \ndef K_a\psi$.
			\item If $K_a\G  \neq \varnothing$, then $K_a\G \in X_{\G}$. Consider any $Y \in C^c$ such that $a \in \chi^c(X_{\G} \cap Y)$. This implies that $K_a\G \in Y$. By Definition~\ref{def:simpcanonic}, $\varnothing \ne Y \subseteq X_{\D}$ for some $\D \in \LMC$ and, hence, $K_a\G \in X_{\D}$. Therefore, $K_a\D =K_a\G $, and it follows that $K_a\D \neq \varnothing$, meaning that  $a \in \chi^c(X_\G \cap X_\D)$. Since  $K_a \psi \Sin \D$ if{f} $K_a \psi \Sin \G$ by Lemma~\ref{lem:LMCclosed}\ref{lem:neg_neg_also_carries}, it follows that  $K_a \psi \notSin \D$.
			By Lemma~\ref{lem:LMCclosed}\ref{lem:LCMclosedstarWrtKa},  $\psi \notSin \D$ and, by induction hypothesis~\eqref{eq:TL_undef}, $\canmod, X_{\D} \ndef \psi$. From Lemma~\ref{lem:defMonotonicity}\ref{monot_defin}, it follows that $\canmod, Y \ndef \psi$. Since face $Y$, which is $a$-adjacent to~$X_\G$, was chosen arbitrarily,  $\canmod, X_{\G} \ndef K_a\psi$.
		\end{itemize}
	\item[\eqref{eq:TL_true}]	
		Assume  that $K_a \psi \in \G$. Hence, $K_a \G \ne \varnothing$, $K_a \G \in X_\G$, and $a \in \chi^c(X_\G)$. By Corollary~\ref{cor:simple_def}\ref{eq:trueK}, for $\C^c, X_\G \vDash K_a \psi$, it is sufficient to show that:
		\begin{enumerate}[(i)]
			\item\label{enum:Kall} $\canmod, Y \isdef \psi$ implies $\canmod, Y \vDash \psi$ for all $Y \in C^c$ such that $a \in \chi^c(X_{\G}\cap Y)$, and
			\item\label{enum:Ksome} $\canmod, Y \vDash \psi$ for some $Y \in C^c$ with $a \in \chi^c(X_{\G}\cap Y)$.
		\end{enumerate} 
		To show~\ref{enum:Kall}, consider any $Y \in C^c$ such that $a \in \chi^c(X_{\G}\cap Y)$ and $\canmod, Y \isdef \psi$. It follows from $a \in \chi^c(X_{\G}\cap Y)$ that $K_a\G \in Y$. By Definition~\ref{def:simpcanonic}, $\varnothing \ne Y \subseteq X_{\D}$ for some $\D \in \LMC$. Then, on the one hand, $K_a\G \in X_{\D}$ which implies that $K_a \D = K_a\G$, and, on the other hand, $\canmod, X_{\D} \isdef \psi$ by Lemma~\ref{lem:defMonotonicity}\ref{monot_defin}. It follows from  induction hypothesis~\eqref{eq:TL_undef} that $\psi \Sin \D$. Since $K_a \D = K_a\G$ we have $K_a \psi \in \D$ and, hence, $\lnot \psi \notin \D$ by Lemma~\ref{lem:LMCclosed}\ref{clause:ifKthennonot}. Therefore, $\psi \in \D$. Thus, $\canmod, X_{\D} \vDash \psi$ by  induction hypothesis~\eqref{eq:TL_true}, and $\canmod, Y \vDash \psi$ by Lemma~\ref{lem:defMonotonicity}\ref{monot_back}. We have shown that $\psi$ is true in all faces $a$-adjacent  to $X_\G$ where it is defined.
		
		To show~\ref{enum:Ksome}, consider $K_a \G \cup \{\psi \}$, which is consistent by Lemma~\ref{lem:LMCclosed}\ref{clause:ifKthennonot}. 
		By Lindenbaum Lemma~\ref{lem:lind}, there exists a DMC set $\D \supseteq  K_a \G \cup \{\psi \}$ such that $K_a \G, \psi  \impdef \psi$ for all~$\psi \in \D$.  
		We have $K_a \G \subseteq K_a \D$ by construction. To prove the opposite inclusion, consider any~$K_a \theta \in \D$.
		Since $K_a\G, \psi \impdef K_a \theta$, by Lemma~\ref{lem:eq_definabilities}\ref{def_raise_ka}  $K_a \G, K_a \psi  \impdef K_a\theta$, and, hence, $\G \impdef K_a \theta$. Then  $K_a \theta \Sin \G$ because $\G$ is DM. 
		Were $\lnot K_a \theta \in \G$, we would have had  $\lnot K_a \theta \in \D$ by Lemma~\ref{lem:LMCclosed}\ref{lem:neg_neg_also_carries}, making $\D$ inconsistent. It follows that $K_a \theta \in \G$ for all~$K_a \theta \in \D$. Thus, $K_a \G = K_a \D$ and $a \in \chi^c(X_{\G} \cap X_{\D})$. It remains to note that, by induction hypothesis~\eqref{eq:TL_true} for $\psi$, we have $\canmod, X_{\D} \vDash \psi$, and $X_\D$ serves as a face  $a$-adjacent  to $X_\G$ where $\psi$ is true.
\item[\eqref{eq:TL_false}]		
		Assume $\lnot K_a \psi \in \G$. By Corollary~\ref{cor:simple_def}\ref{eq:falseK}, for $\C^c, X_\G \vDash \lnot K_a \psi$, it is sufficient to show that
		$\canmod, Y \vDash \lnot \psi$ for some $Y \in C^c$ with $a \in \chi^c(X_{\G}\cap Y)$.
		Consider $K_a \G \cup \{\lnot \psi \}$, which is consistent by Lemma~\ref{lem:LMCclosed}\ref{clause:ifnotKthennophi}. By Lindenbaum Lemma~\ref{lem:lind}, there exists a DMC~set \mbox{$\D \supseteq  K_a \G \cup \{\lnot \psi \}$} such that $K_a \G, \lnot \psi  \impdef \psi$ for all $\psi \in \D$. By construction, $K_a \G \subseteq K_a \D$. To prove the opposite inclusion, consider any $K_a \theta \in \D$. By Lemma~\ref{lem:eq_definabilities}\ref{def_raise_ka_neg},  \mbox{$K_a \G, \lnot K_a \psi  \impdef K_a\theta$}, and, hence, $\G \impdef K_a \theta$. Then  $K_a \theta \Sin \G$ because $\G$~is~DM. 
		Were $\lnot K_a \theta \in \G$,  this would have meant that $\lnot K_a \theta \in \D$ by Lemma~\ref{lem:LMCclosed}\ref{lem:neg_neg_also_carries},  making $\D$~inconsistent. It follows that $K_a \theta \in \G$ for all $K_a \theta \in \D$. Thus, $K_a \G = K_a \D$ and $a \in \chi^c(X_{\G} \cap X_{\D})$. It remains to note that, by  induction hypothesis~\eqref{eq:TL_false}   for $\psi$, we have $\canmod, X_{\D} \vDash \lnot \psi$, and $X_\D$~serves as a face  $a$-adjacent to~$X_\G$ where $\psi$~is false.
\qedhere		
\end{itemize}
	\end{description}
\end{proof}

\begin{thm}[Completeness]
	Axiomatization\/ $\Sfive^{\isdef}$ is complete with respect to~impure simplicial models:
	\[
	\vDash \phi
	\qquad\Longrightarrow\qquad
	\vdash \phi.
	\]
\end{thm}
\begin{proof}
	We prove the contraposition. Suppose $ \nvdash \phi$ for a formula $\phi$. Then $\{\lnot \phi\}$ is a consistent set. By Lindenbaum Lemma~\ref{lem:lind}, $\{\lnot \phi\}$ is a subset of some DMC set $\G$. By Truth Lemma~\ref{lem:truth}, $\canmod, X_{\G} \vDash \lnot \phi$. Thus, $\phi$ is not valid.\end{proof}

\section{Simplicial Canonical Model Reflections}
\label{reflections}

In most similar papers, the completeness theorem would be the happy ending. The deed is done, leaving only  the conclusions section to summarize the results and possibly set up a sequel. Here, however, we would like to take a moment to reflect on the structure of the canonical simplicial model~$\canmod$ constructed in Definition~\ref{def:simpcanonic} and how well it matches the class of all impure simplicial models. As with the normality axiom \textbf{K} and modus ponens rule \textbf{MP}, there are surprising departures from the properties we are used to expect, which makes it worthwhile to dispel some possible misconceptions. 

One structural property we have largely ignored until now was the distinction between facets, \ie~maximal faces, and faces that are not maximal. As mentioned earlier, it is facets that correspond to global states, \ie~worlds in Kripke models, whereas other faces are just intermediate constructs. One can even argue that these non-facet faces are just artifacts of simplicial complex structure with no significance of their own. It was Lemma~\ref{lem:val_facet} that gave us  license  to ignore this distinction. If the logic does not depend on whether we look only at global states  or additionally allow their partial shadows, there seemed little reason to bother about separating these shadows away. 

As a result, we constructed  canonical simplicial model~$\canmod$ to faithfully represent all faces of all simplicial models, which, of course, includes all facets of all simplicial models. But we did not make an effort to ensure that those facets are represented by facets in~$\canmod$. One might expect the canonicity to take care of that. We now show that such an expectation would be misplaced, \ie~that there are facets represented in~$\canmod$ by non-maximal faces.

\begin{exa}
\label{ex:canon_merge}
In impure simplicial complexes, the situation when two agents are unsure whether the third agent is dead or alive is typically represented by 
simplicial models of the form $\C_{\mathrm{Xmas}} = (C,\chi, \ell)$, which  we are going to call \emph{Christmas cracker models},
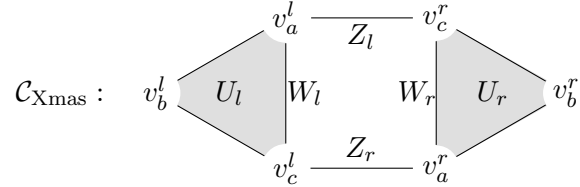
\begin{figure}[H]%
	\centering	 
\begin{tikzpicture}[round/.style={circle,fill=white,inner sep=1},scale=0.5]
\fill[fill=gray!25!white] (-2, 2) -- (-5.46, 0) -- (-2, -2) ;
\fill[fill=gray!25!white] (2, -2) -- (5.56, 0) -- (2, 2) ;

\node[round]  (a0) at (-2, 2) {$v_a^l$};
\node[round]  (c1) at (2, 2) {$v_c^r$};
\node[round]  (a1) at (2, -2) {$v_a^r$};
\node[round]  (c0) at (-2, -2) {$v_c^l$};
\node[round]  (b0) at (-5.46, 0) {$v_b^l$};
\node[round]  (b1) at (5.46, 0) {$v_b^r$};

\draw [-] (a0) -- (c1) node at (1.5,0) {$W_r$};
\draw [-] (a0) -- (b0);
\draw [-] (a0) -- (c0) node at (0,-1.5) {$Z_r$};
\draw [-] (c0) -- (b0);
\draw [-] (c1) -- (b1);
\draw [-] (c0) -- (a1) node at (-1.5,0) {$W_l$};
\draw [-] (a1) -- (b1);
\draw [-] (a1) -- (c1) node at (0,1.5) {$Z_l$};

\node (f1) at (-3.5,0) {$U_l$};
\node (f2) at (3.5,0) {$U_r$};

\node(c) at (-8,0) {$\C_{\mathrm{Xmas}}:$};
\end{tikzpicture}
	\caption{Christmas cracker models}
	\label{fit:Xmas}
\end{figure}
\noindent
for three agents $a$, $b$, and $c$  with
	\begin{itemize}
		\item  six \textbf{vertices} $\VV= \{v_a^l,v_a^r, v_b^l,v_b^r,v_c^l,v_c^r \}$,
		\item four \textbf{facets} $\FF(C)= \bigl\{\{v_a^l,v_b^l,v_c^l\},\quad \{v_a^r,v_b^r,v_c^r\},\quad \{v_a^l,v_c^r\},\quad \{v_c^l,v_a^r\} \bigr\}$,
		\item a \textbf{chromatic map} $\chi  \colon v^{*}_i \mapsto i$, which maps each vertex to its subscript, and
		\item a \textbf{valuation} $\ell$ that must satisfy the following conditions:
		\begin{equation}
		\label{eq:Xmas_val_cond}
		\ell(v_a^l)=\ell(v_a^r),
		\qquad
		\ell(v_b^l)=\ell(v_b^r),
		\qquad
		\ell(v_c^l)=\ell(v_c^r).		
		\end{equation}
	\end{itemize}
Let us use $U_l$ and $U_r$ for 2-dimensional facets (triangles), $Z_l$ and $Z_r$ for 1-dimensional facets~(edges), and $W_l\subset U_l$ and $W_r \subset U_r$ for the $ac$-colored 1-dimensional faces of the triangles (see Figure~\ref{fit:Xmas}). It immediately follows from the rotational symmetry that, omitting the name of the simplicial model and using 
\begin{equation}
\label{eq:Xmas}
\Gamma_{V} \ce \{ \phi \mid \C_{\mathrm{Xmas}}, V \vDash \phi\}
\end{equation}
for the truth sets of its faces, we get
$
\Gamma_{U_l} = \Gamma_{U_r}$,
$\Gamma_{Z_l} = \Gamma_{Z_r}$,
$\Gamma_{W_l} = \Gamma_{W_r}$,~etc.
(recall that the set of formulae that are true in a face also provides a complete description of formulae  false and  undefined in the same face). These sets are DMC sets according to Lemma~\ref{lem:canmod_from_Faces}. 

Accordingly, in the canonical model~$\canmod$, facets $Z_l$ and $Z_r$ are represented by the same canonical simplex $X_{ \Gamma_{Z_l}}=\{K_a \Gamma_{Z_l},  K_c \Gamma_{Z_l}\}$ of dimension 1. By the same token, faces~$W_l$~and~$W_r$~are represented there by  canonical simplex $\{K_a \Gamma_{U_l},  K_c \Gamma_{U_l}\}$ of dimension~1, which is a canonical face within canonical facet $X_{ \Gamma_{U_l}}=\{K_a \Gamma_{U_l},  K_b \Gamma_{U_l}, K_c \Gamma_{U_l}\}$ of the maximal dimension~2. We will show that
$
\{K_a \Gamma_{Z_l},  K_c \Gamma_{Z_l}\} 
=
\{K_a \Gamma_{U_l},  K_c \Gamma_{U_l}\}
$,
\ie~both facets~$Z_l$~and~$Z_r$ and non-maximal faces $W_l$ and $W_r$ are represented by the same  canonical face that is not a facet.

$K_a \Gamma_{Z_l} = K_a \Gamma_{U_l}$ follows directly  from Definitions~\ref{def:defin} and~\ref{def:truth} because  $Z_l$ and $U_l$ share  $a$-colored vertex $v_a^l$. 
To show that $K_c \Gamma_{Z_l} = K_c \Gamma_{U_l}$, it is sufficient to note that the truth values of formulae $K_c \phi$ in $Z_l$ are determined by truth values of $\phi$ in all simplices $c$-adjacent to~$Z_l$, \ie~in
$
\{v_c^r\},  Z_l,  W_r,  \{v_c^r,v_b^r\}$, and  $U_r$,
whereas the same truth values in $U_l$ are determined by truth values of $\phi$ in all simplices $c$-adjacent to $U_l$, \ie~in
$
\{v_c^l\},  Z_r,  W_l, \{v_c^l,v_b^l\}$, and $U_l$.
As already noted earlier, for every of these five pairs of corresponding simplices, truth values are identical. Hence, the truth value of each $K_c \phi$ in $Z_l$ and $U_l$ is also the same.
\end{exa}

This example makes it clear that, while all facets are represented in the canonical model, they may not be represented by a canonical facet. The same example demonstrates that definability-maximal consistent sets need  not be maximal consistent in the usual sense but only with respect to subsets of some definability-complete set.
\begin{lem}
\label{lem:DMCnotMAX}
There exist  DMC sets\/ $\Gamma$ and\/~$\Delta$ such that\/ $\Gamma \subsetneq \Delta$.
\end{lem}
\begin{proof}
It is sufficient to show that $\Gamma_{Z_l} \subsetneq \Gamma_{U_l}$ in Example~\ref{ex:canon_merge}.  
Note that all formulae are defined in $U_l$ because all agents are alive there. We show by mutual induction that any formula that is true (false) in $Z_l$ is true (respectively false) in $U_l$, \ie~that
\[
\C, Z_l  \vDash \phi 
\quad\Longrightarrow \quad
\C, U_l \vDash \phi
\qquad\text{and}\qquad
\C, Z_l  \vDash \lnot \phi 
\quad\Longrightarrow \quad
\C, U_l \vDash \lnot \phi.
\]
For propositional variables $\phi = p_a \in P_a$, this is trivial as truth values for them are determined by~$\ell(v^l_a)$. For $\phi = p_c \in P_c$, it follows from~\eqref{eq:Xmas_val_cond}. No $p_b\in P_b$ is defined in $Z_l$. 

For $\phi = \lnot \psi$, if $\lnot \psi$ is true in $Z_l$, then $\psi$ is false there, hence, $\psi$ is false in $U_l$ by the induction hypothesis, making $\lnot \psi$ true in $U_l$. 
The argument for  false $\lnot \psi$  in $Z_l$ is analogous.

For $\phi = \psi_1 \land \psi_2$, if $\psi_1 \land \psi_2$ is true in $Z_l$, then both  $\psi_1$ and $\psi_2$ are true there, hence, both are true in $U_l$ by the induction hypothesis, and $ \psi_1 \land \psi_2$ is true in $U_l$. If $\psi_1 \land \psi_2$ is false in~$Z_l$, then at least one of the conjuncts is false there. Hence, this conjunct is false in $U_l$ by the induction hypothesis. Since $\psi_1 \land \psi_2$ is defined in $U_l$, it can only be false.

No $K_b \psi$ is defined in $Z_l$. For $\phi = K_a \psi$, if it is true in $Z_l$, then $K_a \psi \in K_a \Gamma_{Z_l} $ according to~\eqref{eq:Xmas}. It was shown in Example~\ref{ex:canon_merge} that $K_a \Gamma_{Z_l}= K_a \Gamma_{U_l}$, \ie~$K_a \psi$ is also true in~$U_l$ by~\eqref{eq:Xmas}. If $K_a \psi$ is false, then $K_a \psi \notin K_a \Gamma_{Z_l} = K_a \Gamma_{U_l}$. Since $K_a \psi$ is defined in $U_l$, it must be false. The case of $\phi = K_c \psi$ is analogous.

Thus, $\Gamma_{Z_l} \subseteq \Gamma_{U_l}$, and it remains to note that either $p_b$ or $\lnot p_b$ must belong to~$\Gamma_{U_l}$ (depending on  valuation $\ell$), but neither belongs to $\Gamma_{Z_l}$, making the inclusion strict.
\end{proof}

So far we have shown that facets may be represented by non-facets in the canonical model and that truth sets of facets may be subsumed by larger facet truth sets. One might, therefore, ask a question whether it always happens. Could it be that all DMC sets are embedded into some maximal consistent sets, in which case the canonical model would be a pure simplicial model? We finish this section by dispelling this suspicion and showing that canonical simplicial model $\canmod$  is  impure.

\begin{exa}
Given at least three agents $a$, $b$, and $c$, consider simplicial model $\C$ in Figure~\ref{fig:bdead}
\begin{figure}[t]%
\centering
\begin{tikzpicture}[scale=0.5]
\node  (0) at (0, 0) {$v_a$};
\node  (1) at (4, 0) {$v_c$};

\draw [-] (0) to (1) node at (2,0.5) {$U$};
\end{tikzpicture}
	\caption{Simplicial model: agent $b$ is dead}
	\label{fig:bdead}
\end{figure}
\noindent
consisting of one facet $U = \{v_a,v_c\}$ of dimension 1. This simplicial model is impure in our terms because one of the agents, $b$, is dead in its only facet. Using $\Gamma \ce \{\phi \mid \C, U \vDash \phi\}$ for the set of formulae that are true in $U$ and recalling that $\Gamma$ is a DMC set, we are guaranteed that it is represented by $X_{\Gamma} = \{K_a \Gamma, K_c \Gamma\}$ in  canonical simplicial model  $\canmod$. We will now show that this $X_{\Gamma}$ is a canonical facet. 

Suppose towards a contradiction that there exists a DMC set $\Delta$ such that $K_a \Gamma = K_a \Delta$, $K_c \Gamma = K_c \Delta$, but $K_b \Delta \ne \varnothing$, meaning that $X_{\Gamma} \subsetneq X_\Delta$ is not a facet. Take any propositional variable $p_b \in P_b$. Since $K_b \phi \impdef p_b$ for any formula $\phi$ by Lemma~\ref{lem:eq_definabilities}\ref{from_K_to_atom}, it follows that~$p_b \Sin \D$, \ie~$p_b \in \D$ or $\lnot p_b \in \D$. If $p_b \in \D$, then, since $K_a \theta, p_b \impdef K_a p_b$ for any $\theta$ 
by Lemma~\ref{lem:eq_definabilities}\ref{lem:if_phi_then_ka_phi}, we conclude that $K_a p_b \Sin \D$ because $K_a \D = K_a \G \ne \varnothing$. The same applies if $\lnot p_b \in \D$ since~\mbox{$p_b \eqdef \lnot p_b$}. On the one hand, $K_a p_b$ cannot be in $K_a \D = K_a \G$ because $K_a p_b$ is undefined in~$U$ and, hence, cannot be true there. 
On the other hand, $\lnot K_a p_b$ cannot be  in $\D$ because this would imply that $K_a \lnot K_a p_b \in K_a \D$ by axiom~\textbf{5} and, hence, $K_a \lnot K_a p_b \in  K_a \G$, which is impossible for the same reason.
This contradiction shows that such a DMC set $\Delta$ does not exist, and $X_\Gamma$ is a canonical facet where agent $b$ is dead.
\end{exa}

\section{Conclusions and Future Work}
\label{conclusion}

We presented the axiomatization called $\Sfive^{\isdef}$ and showed that it is sound and complete for the class of impure complexes. The completeness proof involved the construction of the canonical simplicial model, a novel kind of a canonical model for a three-valued epistemic semantics where the third value means `undefined.' 

We envisage various generalizations and refinements, both concerning axiomatizations but also concerning  epistemic semantics for distributed computing. 

While $\Sfive^{\isdef}$ applies to any number of agents, for small numbers of agents, \eg,~for one or two agents, simpler axiomatizations should be  possible. As follows from Lemmatas~\ref{ex:counterK}--\ref{ex:noMP}, the full system is required starting from at least three agents. 

Different epistemic semantics can be envisioned wherein, apart from local variables, there are global variables stating that agents are alive. Those logics would be more expressive. Such more expressive semantics would allow for an embedding into two-valued semantics for impure complexes.

Adding distributed knowledge modalities is a generalization of the semantics, see, for instance,~\cite{lics23}. This would also increase the expressivity with respect to impure complexes. Distributed knowledge can be considered independently from global atoms for life and death. Finally, one can generalize from simplicial complexes to simplicial sets.

Such generalizations of the epistemic semantics {would} come with different axiomatizations, and also promise to address specific modeling issues in distributed computing.

\section*{Acknowledgment} The authors would like to thank Marta B\'\i{}lkov\'a, Giorgio Cignarale, Stephan Felber, Krisztina Fruzsa, Hugo Rinc\'on Galeana, and Thomas Schl\"ogl for many inspiring discussions and especially Ulrich Schmid for his  enthusiasm and knowledge, which appear to be equally inexhaustible. We also thank the anonymous reviewers for their comments and remarks.

\bibliographystyle{alphaurl}
\bibliography{lmcs_impure}

\appendix

\section{Auxiliary Statements}
\label{a}
\begin{lem}
\label{lem:many_imps}
The following rule is derivable:
	\[
	\displaystyle \orule{\vdash \phi_1 \imp \psi_1 \qquad\dots\qquad \vdash \phi_m \imp \psi_m}{ \vdash \phi_1 \land \dots \land \phi_m \imp \psi_1 \land \dots \land \psi_m}.\]
\end{lem}
\begin{proof}
The tautology
\[
(\phi_1 \imp \psi_1) \imp \Bigl(\cdots \imp (\phi_m \imp \psi_m) \imp (\phi_1 \land \dots \land \phi_m \imp \psi_1 \land \dots \land \psi_m)\Bigr)
\] 
can be used, followed 
by $m$ applications of \textbf{MP}$^{\bowtie}$ justified by the fact that the definability of~$\phi_1 \land \dots \land \phi_m \imp \psi_1 \land \dots \land \psi_m$ implies the definability of all $\phi_i$'s and all $\psi_j$'s, hence, of any other formula in the derivation.
\end{proof}

\begin{lem}
\label{lem:norm_K_many}
\[
\vdash\quad K_a \psi_1 \land \cdots \land K_a \psi_m\quad \to\quad K_a(\psi_1 \land \cdots \land \psi_m).
\]
\end{lem}
\begin{proof}
Induction on $m$. For $m=1$, the statement is trivial because $K_a \psi_1 \to K_a \psi_1$ is a tautology. For $m=2$, it follows from Lemma~\ref{lem:ConjOfKaImpKaOfConj}. Assume the statement holds for $m=k$.
\begin{enumerate}[1.]
\item
$K_a \psi_1 \land \cdots \land K_a \psi_k \to K_a(\psi_1 \land \cdots\land \psi_k)$ \hfill induction hypothesis
\item
$K_a \psi_{k+1} \to K_a \psi_{k+1}$ \hfill tautology
\item
$K_a \psi_1 \land \cdots \land K_a \psi_k \land K_a \psi_{k+1} \to K_a(\psi_1 \land \cdots \land \psi_k) \land K_a \psi_{k+1}$ \hfill Lemma~\ref{lem:many_imps}(1.,2.)
\item
$K_a(\psi_1 \land \cdots \land \psi_k) \land K_a \psi_{k+1}\to K_a(\psi_1 \land \cdots \land \psi_k \land \psi_{k+1})$ \hfill Lemma~\ref{lem:ConjOfKaImpKaOfConj}
\item
$K_a \psi_1 \land \cdots \land K_a \psi_{k+1} \to K_a(\psi_1 \land \cdots \land  \psi_{k+1})$ \hfill Lemma~\ref{lem:condHS}(3.,4.) \\
\strut\hfill as\quad$K_a \psi_1 \land \cdots \land K_a \psi_{k+1} \to K_a(\psi_1 \land \cdots \land \psi_{k+1})\quad \impdef\quad K_a(\psi_1 \land \cdots \land \psi_k) \land K_a \psi_{k+1}$ 
\end{enumerate}
\end{proof}

\section{Additional proofs}
\label{b}

\defcon*
\begin{proof}
\hfill
\begin{enumerate}[(a)]
\item
$\G \impdef \phi$ means that there is a finite subset $\G' \subseteq \G$ such that $\phi$ is defined whenever all formulae from $\G'$ are. But $\G'$ is  also a finite subset of $\G \cup \D$. Hence, $\G \cup \D \impdef \phi$.
\item
Since $\D \impdef \phi$, there exists a finite subset $\D' = \{\psi_1,\dots, \psi_m\} \subseteq \D$ such that $\phi$ is defined whenever all $\psi_i$'s are. Since $\G \impdef \D$, for each $\psi_i$ there must exist a finite subset $\G_i \subseteq \G$ such that $\psi_i$ is defined whenever all formulae from $\G_i$ are. It is now easy to see that $\phi$~is defined whenever all formulae from the finite subset $\bigcup_{i=1}^m \G_i \subseteq \G$ are. Hence, $\G \impdef \phi$.

\item Starting from this clause,  we consider an arbitrary simplicial model $\C = (C, \chi, \ell)$ and its face $X \in C$ and assume all formulae to the left of $\impdef$ to be defined in $X$, \eg,~here we assume that  $\C,X \isdef K_a \phi$.
	By Definition~\ref{def:defin}, there exists $Y \in C$ such that $\C,Y\isdef \phi$ and $a \in \chi(X \cap Y)$. Hence, $a \in \chi(X)$ and the conclusion follows from Definition~\ref{def:defin}.
\item
To show that $p_a \impdef K_a p_a$, assume that  $\C,X \isdef p_a$.
	By Definition~\ref{def:defin},  $a \in \chi(X)$. Equivalently, $a \in \chi(X \cap X)$. Hence, $\C, X \isdef K_a p_a$  by Definition~\ref{def:defin}. The other direction follows from clause~\ref{from_K_to_atom}.
\item 
Assume $\C, X \isdef K_a \phi$. By Definition~\ref{def:defin}, there exists $Y \in C$ such that $\C,Y\isdef \phi$ and $a \in \chi(X \cap Y)$. Then $a \in \chi(Y \cap Y)$ so that  $\C, Y \isdef K_a \phi$. Therefore, $\C, X \isdef K_a K_a \phi$.

Assume $\C, X \isdef K_a K_a \phi$. By Definition~\ref{def:defin}, there exists $Y \in C$ such that $\C,Y\isdef K_a \phi$ and $a \in \chi(X \cap Y)$. This in turn implies that there exists $Z \in C$ such that $\C,Z \isdef \phi$ and $a \in \chi(Y \cap Z)$. Then $a \in \chi(X \cap Z)$ and $\C, X \isdef K_a \phi$.
\item 
That $K_a \phi \eqdef \lnot K_a \phi$ follows from Definition~\ref{def:defin}. It remains to prove only that additionally $\lnot K_a \phi \eqdef K_a \lnot K_a \phi$.

Assume $\C, X \isdef \lnot K_a \phi$. Hence, $\C, X \isdef K_a \phi$.  
By Definition~\ref{def:defin}, there exists $Y \in C$ such that $\C,Y\isdef \phi$ and $a \in \chi(X \cap Y)$. Then $a \in \chi(Y \cap Y)$ so that $\C, Y \isdef   K_a \phi$ and $\C, Y \isdef   \lnot K_a \phi$.
Therefore, $\C, X \isdef K_a \lnot K_a \phi$.

Assume $\C, X \isdef K_a \lnot K_a \phi$. By Definition~\ref{def:defin}, there is $Y \in C$ such that $\C,Y\isdef \lnot K_a \phi$ and $a \in \chi(X \cap Y)$, hence, also $\C, Y \isdef K_a \phi$. This in turn implies that there exists~$Z \in C$ such that $\C,Z \isdef \phi$ and $a \in \chi(Y \cap Z)$. Then $a \in \chi(X \cap Z)$ and $\C, X \isdef K_a \phi$, hence, $\C, X \isdef \lnot K_a \phi$.
\item
Assume $\C, X \isdef K_a \theta$ and $\C, X \isdef \phi$. By Definition~\ref{def:defin}, there exists $Y \in C$ such that $\C,Y\isdef \theta$ and $a \in \chi(X \cap Y)$.  Then $a \in \chi(X \cap X)$ so that $\C, X \isdef  K_a \phi$.
\item
 Assume $ K_a \G, \phi \impdef K_a\psi$,
\ie~there is a finite subset $\G' \subseteq K_a \G \cup \{\phi\}$ such that $K_a \psi$~is defined  whenever all formulae from $\G'$ are. If $\phi \notin \G'$, then  $\G' \subseteq K_a \G \cup \{K_a \phi\}$, so we are done. Otherwise,  $\G' = K_a \Theta \cup \{\phi\}$ for some finite subset $\Theta \subseteq \G$.
Assume  $\C, X \isdef K_a \theta$ for all $K_a \theta \in K_a\Theta$ and $\C, X \isdef K_a \phi$.
The latter means that  there exists $Y \in C$ such that $a \in \chi(X \cap Y)$ and $\C,Y \isdef \phi$. It follows from the former that, for each $K_a \theta \in K_a\Theta$, there exists $Z_\theta \in C$ such that $a \in \chi(X \cap Z_\theta)$ and $\C, Z_\theta \isdef \theta$. Since face $X$ cannot have  two $a$-vertices,  we have $a \in \chi(Y \cap Z_\theta)$, making  $\C, Y \isdef K_a \theta$ for each~$K_a \theta \in K_a\Theta$. Recalling $\C,Y \isdef \phi$, we   conclude $\C,Y \isdef K_a \psi$. Therefore, there exists $Z \in C$ such that $a \in \chi(Y \cap Z)$ and $\C,Z \isdef \psi$. The uniqueness of  the $a$-vertex in $Y$ yields once again that  $a \in \chi(X \cap Z)$, hence, $\C,X \isdef K_a \psi$. We have shown that $K_a \psi$ is defined  whenever all formulae from the finite subset $K_a \Theta \cup \{K_a \phi\} \subseteq K_a\G \cup \{K_a \phi\}$ are. Hence, $K_a\G, K_a \phi \impdef K_a \psi$.
\item
Assume $ K_a \G, \lnot\phi \impdef K_a\psi$. Since $\phi \eqdef \lnot \phi$, also $ K_a \G, \phi \impdef K_a\psi$. By clause~\ref{def_raise_ka}, it follows that $K_a\G, K_a \phi \impdef K_a \psi$. Since $K_a \phi \eqdef \lnot K_a \phi$, we finally obtain $K_a\G, \lnot K_a \phi \impdef K_a \psi$.

\item 
Assume $\C, X \isdef K_a (K_a \psi_1 \land \cdots \land K_a \psi_m) \imp K_a \theta$. Then, $\C, X \isdef K_a (K_a \psi_1 \land \cdots \land K_a \psi_m)$ and $\C, X \isdef K_a \theta$. By Definition~\ref{def:defin},  there exists $Y\in C$ such that $\C, Y \isdef \theta$ and $a \in \chi (X \cap Y)$ and there exists $Z \in C$ such that $\C, Z \isdef K_a \psi_1 \land \cdots \land K_a \psi_m$ and $a \in \chi (X \cap Z)$. From the latter, we conclude that $\C, Z \isdef  K_a \psi_i$ for each $i$, meaning that for each $i$ there is~$Z_i \in C$ such that $\C, Z_i \isdef \psi_i$ and $a \in \chi(Z \cap Z_i)$. Since $a \in \chi(Y \cap Z_i)$ for each $i$, we have $\C, Y \isdef K_a \psi_i$ for each $i$. Consequently, $\C, Y \isdef K_a \psi_1 \land \cdots \land K_a \psi_m \imp  \theta$, and, as required, $\C, X \isdef K_a(K_a \psi_1 \land \cdots \land K_a \psi_m \imp  \theta)$.

\item 
The proof of this clause is the same as for the preceding clause because $K_a \theta \eqdef \M_a \theta$.

\item 
Assume $\C, X \isdef K_a \psi_1 \land \cdots \land K_a \psi_m$. Since $\C, X \isdef K_a \psi_1$, there is  $Z$ such that $\C, Z \isdef \psi_1$ and $a \in \chi(X \cap Z)$. Thus, $a \in \chi(X \cap X)$. Therefore, $\C, X \isdef K_a(K_a \psi_1 \land \cdots \land K_a \psi_m)$. 

Assume  $\C, X \isdef K_a(K_a \psi_1 \land \cdots \land K_a \psi_m)$. By Definition~\ref{def:defin}, there exists $Y \in C$ such that $\C, Y \isdef K_a \psi_1 \land \cdots \land K_a \psi_m$ and $a \in \chi(X \cap Y)$. It follows that $\C,Y \isdef K_a \psi_i$ for each~$i$, meaning that for each $i$ there exists $Z_i \in C$ such that $\C, Z_i \isdef \psi_i$ and $a \in \chi(Y \cap Z_i)$. Since we then have $a \in \chi(X \cap Z_i)$ for each $i$, we conclude that $\C, X \isdef K_a \psi_i$ for each~$i$. Thus $\C, X \isdef K_a \psi_1 \land \cdots \land K_a \psi_m$.
\qedhere
\end{enumerate} 
\end{proof}

\soundimp*
\begin{proof}
Additional cases  proved earlier in~\cite{vDitKuz22arXiv}.
\begin{itemize}
\item Substitution instances of propositional tautologies are valid because their truth tables produce truth whenever they produce anything.
\item Axiom {\bf L} is valid. 
Assume $\C,X \isdef K_a p_a \lor K_a \neg p_a$. Then $\C,X \isdef K_a p_a$, that is, there is a face~$Y$ with $a \in \chi(X \cap Y)$ such that $\C,Y \isdef p_a$. Hence, $a \in \chi(X)$ and $\C,X \isdef p_a$. Let $X_a \in X$ be the vertex of $X$ with color $a$, \ie~$\chi(X_a) = a$. 
 We now show that $\C,X \vDash K_a p_a \lor K_a \neg p_a$. Given $\C,X_a \isdef p_a$, either $\C,X_a \vDash p_a$ or  $\C,X_a\vDash \neg p_a$. Since $X_a$~is  the only $a$-colored vertex in any face $a$-adjacent to $X_a$,  in the former case,  $\C,X_a \vDash K_a p_a$ and in the latter case, $\C,X_a \vDash K_a \neg p_a$. Either way, $\C,X_a \vDash K_a p_a \lor K_a \neg p_a$. Since $\{X_a\} \subseteq X$, it now follows from Lemma~\ref{lem:defMonotonicity}\ref{monot:true} that $\C,X \vDash K_a p_a \lor K_a \neg p_a$, as required.

		\item Axiom {\bf T} is valid.  Assume $\C,X \isdef K_a\phi \to  \phi$. It follows that $\C, X \isdef \phi$ and $a \in \chi(X)$. If $\C,X \vDash K_a \phi$, then $\C,Y \isdef \phi$ implies $\C,Y \vDash \phi$ for all $a$-adjacent faces $Y$, in particular, for~$Y=X$ itself, which is $a$-adjacent because $a \in \chi(X\cap X)$. Since $\C,X \isdef \phi$, we conclude $\C,X \vDash \phi$. We have shown that $\vDash K_a \phi \to \phi$.
		\item  Axiom {\bf 4} is valid.  If  $\C,X \vDash K_a \phi$, then $\C,X\isdef K_a\phi$ by Lemma~\ref{lem:threevalues}\ref{clause:threevalues_one}. In particular, $\C, W \isdef \phi$ for some   $W$ such that $a \in \chi(X \cap W)$. We also conclude $\C,X \isdef K_a K_a\phi$ by Lemma~\ref{lem:eq_definabilities}\ref{def:trans} and, hence, $\C, X \isdef K_a \phi \to K_a K_a\phi$, so that we do not need to assume this separately.	
		In order to prove $\C,X \vDash K_a K_a \phi$, take an arbitrary  face $Y$ such that $a \in \chi(X\cap Y)$. Then $a \in \chi(Y \cap W)$ and, hence, $\C, Y \isdef K_a \phi$.
		To  show that $\C,Y \vDash K_a \phi$, take an arbitrary face  $Z$ such that $a \in \chi(Y\cap Z)$ and $\C,Z \isdef \phi$.  Then $a \in \chi(X\cap Z)$. Thus, $\C,Z \isdef \phi$ and the assumption $\C,X\vDash K_a \phi$ yield $\C,Z \vDash \phi$. We proved that $\phi$~is true whenever defined in faces $a$-adjacent to $Y$. Hence, $\C, Y \vDash K_a \phi$. This shows that  $K_a \phi$~is true whenever defined in faces $a$-adjacent to~$X$. In addition,  $\C, X \isdef K_a \phi$ and $a \in \chi(X \cap X)$. Hence, $\C, X \vDash K_a K_a \phi$. We have shown $\vDash K_a \phi \imp K_a K_a \phi$.
		\item    Axiom {\bf 5} is valid.  If  $\C,X \vDash \lnot K_a \phi$, then  $\C, W \vDash \lnot\phi$ for some   $W$ such that $a \in \chi(X \cap W)$ by Corollary~\ref{cor:simple_def}\ref{eq:falseK}. By Lemma~\ref{lem:threevalues}\ref{clause:threevalues_one}, both  $\C,X\isdef \lnot K_a\phi$ and $\C, W \isdef \lnot \phi$. The former implies $\C,X \isdef K_a \lnot K_a\phi$ by Lemma~\ref{lem:eq_definabilities}\ref{def:eucl}. The latter implies $\C, W \isdef  \phi$ by Definition~\ref{def:defin}.
Hence, $\C, X \isdef \lnot K_a \phi \to K_a \lnot K_a\phi$, so that we do not need to assume this separately.
		In order to prove $\C,X \vDash K_a \lnot K_a \phi$, take an arbitrary  face $Y$ such that $a \in \chi(X\cap Y)$. Since $a \in \chi(Y \cap W)$, it follows by Corollary~\ref{cor:simple_def}\ref{eq:falseK} that   $\C,Y \vDash \lnot K_a \phi$. Thus, we proved that $\lnot K_a \phi$~is true in all faces $a$-adjacent to $X$. Given that  $X$ is one of such faces, $\C,X \vDash K_a \lnot K_a \phi$. We have shown $\vDash \lnot K_a \phi \imp K_a \lnot K_a \phi$.

\item Necessitation rule {\bf N} preserves validity. Assume $\vDash \phi$  
 and  $\C,X \isdef K_a \phi$. Validity of $\phi$ means that it is true whenever defined. In particular, it is true whenever defined for all faces $a$-adjacent to $X$. In addition, $\C,X \isdef K_a \phi$ means that there is a face $a$-adjacent to~$X$ where $\phi$ is defined. Therefore, $\C,X \vDash K_a \phi$. We have shown that $K_a \phi$ is true whenever defined, as required.
	Thus, $ \vDash K_a \phi$.\qedhere
\end{itemize}
\end{proof}

\contrapos*
\begin{proof}
\hfill
\begin{enumerate}[1.]
\item $(\phi \to \psi) \to (\lnot \psi \to \lnot \phi)$ \hfill tautology
\item $\phi \to \psi$ \hfill derivable by assumption
\item $\lnot \psi \to \lnot \phi$ \hfill \textbf{MP}$^{\bowtie}$(1.,2.) as $\lnot \psi \to \lnot \phi \impdef \phi \to \psi$
\end{enumerate}
\end{proof}

\kamonot*
\begin{proof}
\hfill
	\begin{enumerate}[1.]
		\item $\phi \imp \psi$
		\hfill derivable by assumption
		\item $K_a(\phi \imp \psi)$
		\hfill \textbf{N}(1.)
		\item $K_a(\phi \imp \psi) \imp (K_a \phi \imp K_a \psi)$
		\hfill \textbf{K}$^{\bowtie}$ as $\psi , K_a \phi \impdef \phi$
		\item $K_a \phi \imp K_a \psi$
		\hfill \textbf{MP}$^{\bowtie}$(2.,3.) as $K_a \phi \imp K_a \psi \impdef K_a(\phi \imp \psi)$ follows from $\psi , K_a \phi \impdef \phi$
	\end{enumerate}
	Let us explain why $K_a \phi \to K_a \psi \impdef K_a(\phi \imp \psi)$ holds in the last step. If $K_a \phi \to K_a \psi$ is defined in a face $X$, then both $K_a \phi$ and  $K_a \psi$ are defined. The latter implies that there is an $a$-adjacent face $Y$ where $\psi$ is defined. As before, if $K_a \phi$ is defined in $X$, it is defined in all $a$-adjacent faces, including $Y$. Hence, the definability provision $\psi , K_a \phi \impdef \phi$ applied to~$Y$ yields that $\phi$, and, hence, $\phi \to \psi$, is also defined in $Y$, making $K_a (\phi \to \psi)$ defined in~$X$.
\end{proof}

\kaconj*
\begin{proof}
\hfill
	\begin{enumerate}[1.]
		\item $\phi \imp (\psi \imp \phi \land \psi)$
		\hfill tautology
		\item $K_a\bigl(\phi \imp (\psi \imp \phi \land \psi)\bigr)$
		\hfill \textbf{N}(1.) 
		\item $K_a\bigl(\phi \imp (\psi \imp \phi \land \psi)\bigr) \imp \bigl(K_a \phi \imp K_a(\psi \imp \phi \land \psi)\bigr)$
		\hfill axiom \textbf{K}$^{\bowtie}$ as $\psi \imp \phi \land \psi \impdef \phi$
		\item $K_a \phi \imp K_a(\psi \imp \phi \land \psi)$
		\hfill \textbf{MP}$^{\bowtie}$(2.,3.)\\\strut\hfill as\quad $K_a \phi \imp K_a(\psi \imp \phi \land \psi)\quad\impdef\quad K_a\bigl(\phi \imp (\psi \imp \phi \land \psi)\bigr)$
		\item $K_a(\psi \imp \phi \land \psi) \imp \bigl(K_a\psi \imp K_a(\phi \land \psi)\bigr)$
		\hfill axiom \textbf{K}$^{\bowtie}$ as $\phi \land \psi \impdef \psi$
		\item $K_a \phi \imp \bigl(K_a\psi \imp K_a(\phi \land \psi)\bigr)$ \hfill Lemma~\ref{lem:condHS}(4.,5.)\\\strut\hfill  as\quad  $K_a \phi \imp \bigl(K_a\psi \imp K_a(\phi \land \psi)\bigr)\quad \impdef\quad K_a(\psi \imp \phi \land \psi)$
		\item $\Bigl(K_a \phi \imp \bigl(K_a\psi \imp K_a(\phi \land \psi)\bigr)\Bigr) \imp \Bigl(K_a \phi \land K_a \psi  \imp K_a(\phi \land \psi)\Bigr)$ \hfill tautology
		\item $K_a \phi \land K_a \psi  \imp K_a(\phi \land \psi)$ \hfill \textbf{MP}$^{\bowtie}$(6.,7.)\\\strut\hfill as\quad $K_a \phi \land K_a \psi  \imp K_a(\phi \land \psi)\quad \impdef\quad K_a \phi \imp \bigl(K_a\psi \imp K_a(\phi \land \psi)\bigr)$
	\end{enumerate}
	
	Figure~\ref{fig:counterK} provides a counterexample to the opposite direction. Indeed, in simplicial model~$\C_{\mathbf{K}}$, formula $K_a(p_b \land p_c)$ is true, formula $K_a p_c$ is also true, but $K_a p_b$ is false because $p_b$~is false in~$X$  (where $p_c$ is undefined).
\end{proof}

\deductth*
\begin{proof}
\hfill
\begin{itemize}
\item 
Case  $\vdash \psi$. Then $\vdash \psi \to (\phi \to \psi)$ as a tautology, and $\vdash \phi \to \psi$ is obtained by~\textbf{MP}$^{\bowtie}$ because $\phi \to \psi \impdef \psi$. Hence, $\G \vdash \phi \to \psi$ by Definition~\ref{def:deriv}.
\item 
Case $\vdash \phi \to \psi$ for the finite subset $\{\phi\}\subseteq \G \cup \{\phi\}$. Then  $\G \vdash \phi \to \psi$ by Definition~\ref{def:deriv}.
\item
Case  $\vdash \bigwedge\G_0 \imp \psi$ for a finite non-empty subset $\G_0 \subseteq \G$.
\begin{enumerate}[1.]
\item 
$\bigwedge \G_0  \imp \psi$ \hfill assumption
\item
$\bigwedge\G_0 \land \phi \imp \bigwedge\G_0$ \hfill tautology
\item $\bigwedge\G_0 \land \phi \imp \psi$ \hfill 
Lemma~\ref{lem:condHS}(1.,2.) as $\bigwedge\G_0 \land \phi \imp \psi \impdef \bigwedge\G_0$
\item
$(\bigwedge\G_0 \land \phi \imp \psi) \imp \bigl(\bigwedge\G_0 \imp (\phi \imp \psi)\bigr)$ \hfill tautology
\item $\bigwedge\G_0 \imp (\phi \imp \psi)$ 
\hfill \textbf{MP}$^{\bowtie}$(3.,4.) as $\bigwedge\G_0 \imp (\phi \imp \psi) \impdef \bigwedge\G_0 \land \phi \imp \psi$
\end{enumerate}
Hence, $\G \vdash \phi \imp \psi$ by Definition~\ref{def:deriv} based on $\G_0\subseteq \G$.
\item
Case $\vdash \bigwedge\G_0 \land \phi \imp \psi$ for a finite subset $ \{\phi\} \sqcup \G_0 \subseteq \G$ such that $\G_0 \ne \varnothing$.
\begin{enumerate}[1.]
	\item $\bigwedge\G_0 \land \phi \imp \psi$ \hfill assumption 
	\item
	$(\bigwedge\G_0 \land \phi \imp \psi) \imp \bigl(\bigwedge\G_0 \imp (\phi \imp \psi)\bigr)$ \hfill tautology
	\item $\bigwedge\G_0 \imp (\phi \imp \psi)$ 
	\hfill \textbf{MP}$^{\bowtie}$(1.,2.) as $\bigwedge\G_0 \imp (\phi \imp \psi) \impdef \bigwedge\G_0 \land \phi \imp \psi$
\end{enumerate}
Hence, $\G \vdash \phi \imp \psi$ by Definition~\ref{def:deriv} based on $\G_0\subseteq \G$. \qedhere
\end{itemize}
\end{proof}

\addextraK*
\begin{proof}
\hfill
\begin{enumerate}[1.]
		\item $K_a \psi_i \imp K_a K_a \psi_i$ \qquad for each $i=1,\dots,m$\hfill axiom \textbf{4}
		\item $K_a \psi_1 \land \cdots \land K_a \psi_m \imp K_a K_a \psi_1 \land \cdots \land K_a K_a \psi_m$ \hfill Lemma~\ref{lem:many_imps}(1.)
		\item $K_a K_a \psi_1 \land \cdots \land K_a K_a \psi_m \imp K_a(K_a \psi_1 \land \cdots \land K_a \psi_m)$ \hfill Lemma~\ref{lem:norm_K_many}
		\item $K_a \psi_1 \land \cdots \land K_a \psi_m \imp K_a(K_a \psi_1 \land \cdots \land K_a \psi_m)$ \strut\hfill Lemma~\ref{lem:condHS}(2.,3.)\\\strut\hfill  as\quad $K_a \psi_1 \land \cdots \land K_a \psi_m \imp K_a(K_a \psi_1 \land \cdots \land K_a \psi_m)\quad \impdef\quad K_a K_a \psi_1 \land \cdots \land K_a K_a \psi_m$ 
		\end{enumerate}
\end{proof}

\derforcons*
\begin{proof}
\hfill
	\begin{enumerate}[1.]
		\item $\lnot (\phi \land K_a \psi_1 \land \cdots \land K_a \psi_m)$
		\hfill derivable by assumption
		\item $K_a \psi_1 \land \cdots \land K_a \psi_m \imp \lnot \phi$
		\hfill    equidefinable and equivalent to 1.
		\item $K_a\bigl(K_a \psi_1 \land \cdots \land K_a \psi_m \imp \lnot \phi\bigr)$
		\hfill \textbf{N}(2.)
		\item $K_a \bigl(K_a \psi_1 \land \cdots \land K_a \psi_m \imp \lnot \phi\bigr) \imp \bigl(K_a (K_a \psi_1 \land \cdots \land K_a \psi_m) \imp \M_a \lnot \phi \bigr)$
		\hfill axiom $\mathbf{K\M}$
		\item $K_a (K_a \psi_1 \land \cdots \land K_a \psi_m) \imp \M_a \lnot \phi$ \hfill 
		 \textbf{MP}$^{\bowtie}$(3.,4.) as, by Lemma~\ref{lem:eq_definabilities}\ref{lem:def_tough_hatKhat}, \\\strut\hfill$K_a (K_a \psi_1 \land \cdots \land K_a \psi_m) \imp \M_a \lnot \phi\quad \impdef\quad K_a \bigl(K_a \psi_1 \land \cdots \land K_a \psi_m \imp \lnot \phi\bigr)$
		\item $K_a \psi_1 \land \cdots \land K_a \psi_m \imp K_a(K_a \psi_1 \land \cdots \land K_a \psi_m)$\hfill Lemma~\ref{lem:add_extra_K}
		\item $K_a \psi_1 \land \cdots \land K_a \psi_m \imp \M_a \lnot \phi$ \hfill
		Lemma~\ref{lem:condHS}(5.,6.) as, by Lemma~\ref{lem:eq_definabilities}\ref{lem:def_tough_K},\\\strut\hfill  $K_a \psi_1 \land \cdots \land K_a \psi_m \imp \M_a \lnot \phi\quad\impdef\quad K_a (K_a \psi_1 \land \cdots \land K_a \psi_m)$ 
		\item $K_a \psi_1 \land \cdots \land K_a \psi_m \imp \lnot K_a \lnot \lnot \phi$ \hfill from 7. by definition of $\M_a$
		\item $\lnot \lnot K_a \lnot \lnot \phi \imp \lnot (K_a \psi_1 \land \cdots \land K_a \psi_m)$\hfill Lemma~\ref{lem:contrapos}(8.)
		\item $ K_a \lnot \lnot \phi \imp \lnot (K_a \psi_1 \land \cdots \land K_a \psi_m)$\hfill equidefinable and equivalent to 9.
		\item $\phi \to \lnot \lnot \phi$ \hfill tautology
		\item $K_a \phi \to K_a \lnot \lnot \phi$ \hfill Lemma~\ref{lem:syl5}(11.) as $\lnot \lnot \phi \impdef \phi$
		\item $K_a \phi \imp \lnot (K_a \psi_1 \land \cdots \land K_a \psi_m)$ \hfill Lemma~\ref{lem:condHS}(12.,10.)\\\strut\hfill   as\quad 
		$K_a \phi \imp \lnot (K_a \psi_1 \land \cdots \land K_a \psi_m)\quad\impdef\quad K_a \lnot \lnot \phi$
		\item $\lnot (K_a \phi \land K_a \psi_1 \land \cdots \land K_a \psi_m)$
		\hfill equidefinable and equivalent to  13. 
			\end{enumerate} 
\end{proof}

\derforconsneg*
\begin{proof}
	\hfill
	\begin{enumerate}[1.]
		\item $\lnot (\lnot\phi \land K_a \psi_1 \land \cdots \land K_a \psi_m)$
		\hfill derivable by assumption
		\item $K_a \psi_1 \land \cdots \land K_a \psi_m \imp \phi$
		\hfill equidefinable and equivalent to  1. 		
		\item $K_a(K_a \psi_1 \land \cdots \land K_a \psi_m \imp \phi)$
		\hfill \textbf{N}(2.)
		\item $K_a(K_a \psi_1 \land \cdots \land K_a \psi_m \imp \phi) \imp \bigl(K_a (K_a \psi_1 \land \cdots \land K_a \psi_m) \imp K_a \phi\bigr)$ \hfill \textbf{K}$^{\bowtie}$\\
		\strut \hfill as\quad  $K_a (K_a \psi_1 \land \cdots \land K_a \psi_m)\quad \impdef\quad K_a \psi_1 \land \cdots \land K_a \psi_m$ by Lemma~\ref{lem:eq_definabilities}\ref{lem:def_tough_K}
		\item $K_a (K_a \psi_1 \land \cdots \land K_a \psi_m) \imp K_a \phi$  \hfill \textbf{MP}$^{\bowtie}$(3.,4.) as, by Lemma~\ref{lem:eq_definabilities}\ref{lem:def_tough_hatK}, \\
		\strut\hfill$K_a (K_a \psi_1 \land \cdots \land K_a \psi_m) \imp K_a \phi\quad \impdef\quad K_a(K_a \psi_1 \land \cdots \land K_a \psi_m \imp \phi)$ 
		\item $K_a \psi_1 \land \cdots \land K_a \psi_m \imp K_a(K_a \psi_1 \land \cdots \land K_a \psi_m)$\hfill Lemma~\ref{lem:add_extra_K}
		\item $K_a \psi_1 \land \cdots \land K_a \psi_m \imp K_a \phi$ \hfill Lemma~\ref{lem:condHS}(5.,6.) as, by Lemma~\ref{lem:eq_definabilities}\ref{lem:def_tough_K}, \\\strut\hfill $K_a \psi_1 \land \cdots \land K_a \psi_m \imp K_a \phi \quad\impdef\quad K_a (K_a \psi_1 \land \cdots \land K_a \psi_m)$ 
		\item $\lnot (\lnot K_a \phi \land K_a \psi_1 \land \cdots \land K_a \psi_m)$
		\hfill equidefinable and equivalent to 7.  
	\end{enumerate} 
\end{proof}

\supsetincons*
\begin{proof}\hfill
\begin{enumerate}[(a)]
\item
\begin{enumerate}[1.]
	\item $\lnot\bigwedge \G \imp \lnot\left(\phi \land \bigwedge \G\right)$ \hfill tautology
	\item $\lnot\bigwedge \G$ \hfill derivable by assumption
	\item $\lnot\left(\phi \land \bigwedge \G\right)$ \hfill MP$^{\isdef}$(1.,2.) as $\lnot\left(\phi \land \bigwedge \G\right) \impdef \lnot\bigwedge \G$
\end{enumerate}
\item
\begin{enumerate}[1.]
	\item $\phi \land \bigwedge \G  \imp  \bigwedge \G $ \hfill tautology
	\item $\bigwedge \G \to \psi$ \hfill derivable by assumption
	\item $\phi \land \bigwedge \G \to \psi $ \hfill Lemma~\ref{lem:condHS}(1.,2.) as $\phi \land \bigwedge \G \to \psi \impdef \bigwedge \G$
\end{enumerate}
\end{enumerate} 
\end{proof}

\cancor*
\begin{proof}
Since any face of any simplicial model yields a DMC set by Lemma~\ref{lem:canmod_from_Faces}, $\LMC \ne \varnothing$. DMC~sets are not empty by definition. 
Since $\lnot \phi \impdef \phi$ and $\phi_1 \land \phi_2 \impdef \phi_i$ for $i=1,2$, it follows that, for each  $\G \in \LMC$, either $p_a \Sin\G$ or $K_a \psi \Sin \G$ for some formula $\psi$ and $a \in A$  because $\G$~is DM. 
Since $p_a \impdef K_a p_a$ by Lemma~\ref{lem:eq_definabilities}\ref{lem:eq_atom_K} and, hence, also $\lnot p_a \impdef K_a p_a$, we conclude that $K_a \psi \Sin \G$ for some formula $\psi$ and $a \in A$ in either case, from which it follows that $K_a \G \ne \varnothing$ for this $a$. Hence,  $X_\G \ne \varnothing$ for all $\G \in \LMC$. As a consequence, $C^c \ne\varnothing$ and every DMC set~$\G$ is represented in the canonical model by  $X_\G \in C^c$.
Set $C^c$ is closed under taking non-empty subsets and all its faces are non-empty by construction. The faces  are finite because for~$X \subseteq X_\G$, we have $|X| \leq |X_\G| \leq n+1$. Chromatic map $\chi^c$ is defined correctly because all vertices $K_a \G\in\VV^c$ are non-empty by construction and all formulae in $K_a \G$ start with $K_a$.  Thus, it cannot happen that $K_a \G = K_b \D \ne \varnothing$ for $a \ne b$, and the definition of~$\chi^c$ is unambiguous. 
Finally, by construction, each $X_\G$ (and, consequently, any $X \subseteq X_\G$) has at most one $K_a \G$ per agent $a$. Thus, the chromatic map condition is fulfilled. The correctness of~$\ell^c$ relies on the same argument as that of $\chi^c$, but in addition, by construction, $\ell^c(K_a \G) \subseteq P_a$  and   $\ell^c(K_a \G) = \ell^c(K_a \D)$ in case $K_a \G = K_a \D \ne \varnothing$ for $\G \ne \D$.
\end{proof}

\end{document}